\newcommand{\dettagli}[1]{}
\newtheorem{remark}{Remark}
\newtheorem{proposition}{Proposition}
\newtheorem{theorem}{Theorem}
\newtheorem{lemma}{Lemma}
\newtheorem{corollary}{Corollary}
\def\twoheadrightarrowfill@{\arrowfill@\relbar\relbar\twoheadrightarrow}
\newcommand{\xtwoheadrightarrow}[2][]{\ext@arrow 0359\twoheadrightarrowfill@{#1}{#2}}
\newcommand{\der}[1]{\ensuremath{\xrightarrow{#1}}}
\newcommand \tpl[1]{\langle #1 \rangle}
\def\AVPA{\text{\sffamily\upshape AVPA}\xspace}
\def\state{\text{\sffamily state}\xspace}
\def\TOP{\text{\sffamily top}\xspace}
\def\vis{\text{\sffamily vis}\xspace}
\def\children{\text{\sffamily children}\xspace}
\def\PMC{\text{{\sffamily PMC}}\xspace}
\def\CTL{\text{{\sffamily CTL}}\xspace}
\def\LTL{\text{{\sffamily LTL}}\xspace}
\def\CTLSTAR{\text{{\sffamily CTL$^{*}$}}\xspace}
\def\ECTL{\text{{\sffamily ECTL}}\xspace}
\def\ACTL{\text{{\sffamily ACTL}}\xspace}
\def\OPD{\text{{\sffamily OPD}}\xspace}
\def\OPDS{\text{{\sffamily OPDs}}\xspace}
\def\CON{\text{{\sffamily CON}}\xspace}
\def\ID{\text{{\sffamily ID}}\xspace}
\def\length{\text{{\sffamily length}}\xspace}
\def\word{\text{{\sffamily word}}\xspace}
\def\suffix{\text{{\sffamily suffix}}\xspace}
\newcommand\System{{\cal S}}
\newcommand\N{{\mathbb{N}}}
\newcommand{\X}{\mathsf{X}}     
\newcommand{\E}{\mathsf{E}}     
\newcommand{\Aop}{\mathsf{A}}     
\newcommand{\uU}{\mathsf{U}}    
\newcommand{\U}{{\,\uU\,}}      
\newcommand{\F}{\mathsf{F}}     
\newcommand{\G}{\mathsf{G}}     
\def\EXPTIME{{\sc Exptime}}
\def\TWOEXPTIME{{\sc 2Exptime}}
\def\EXPSPACE{{\sc Expspace}}
\title{New results on pushdown module checking\\ with imperfect information}
\author{Laura Bozzelli
\institute{Technical University of Madrid (UPM),
28660 Boadilla del Monte, Madrid, SPAIN}
}
\begin{document}
\maketitle

\begin{abstract}
Model checking of open pushdown systems (\OPD) w.r.t.~standard branching temporal logics (\emph{pushdown module checking} or \PMC)
has been recently investigated in the literature, both in the context of environments with perfect and imperfect information
about the system (in the last case, the environment has only a partial view of the system's control states and stack content). For standard $\CTL$, \PMC with imperfect information is known to be undecidable. If the stack content is assumed to be \emph{visible}, then
the problem is decidable and \TWOEXPTIME-complete (matching the complexity of \PMC with perfect information against $\CTL$).
The decidability status of \PMC with imperfect information against $\CTL$ restricted to the case where the depth of the stack content is visible is  open. In this paper, we show that with this restriction, \PMC with imperfect information against $\CTL$ remains undecidable. On the other hand, we individuate an interesting subclass of \OPDS with visible stack content depth such that  \PMC with imperfect information  against the  existential fragment of $\CTL$ is decidable
and in \TWOEXPTIME. Moreover, we show that the \emph{program complexity}  of \PMC  with imperfect information and visible stack content against $\CTL$ is
\TWOEXPTIME-complete (hence, exponentially harder than the program complexity of \PMC with perfect information, which is known to be \EXPTIME-complete).

\end{abstract}

\section{Introduction}

\noindent \textbf{Verification of open systems.} In the literature, formal verification  of open systems is in general formulated as two-players games (between the system and the environment). This setting is suitable when the correctness requirements on the behavior of the system are formalized by linear-time temporal logics. In order to take into account also requirements expressible in branching-time temporal logics, recently,
Kupferman, Vardi, and Wolper \cite{KV96,KVW01} introduce the \emph{module checking} framework for the verification of finite-state open systems.
In such a framework, the open finite-state system is described by a labeled state-transition graph called \emph{module}, whose set of states is partitioned into a set of \emph{system states} (where the system makes a transition) and a set of \emph{environment states} (where the environment makes a transition). Given a module $\mathcal{M}$ describing the system to be verified, and a branching-time temporal formula $\varphi$ specifying the desired behavior of the system, the \emph{module checking problem} asks whether for all possible environments, $\mathcal{M}$ satisfies $\varphi$. In particular, it might be that the environment does not enable all the external nondeterministic choices. Module checking thus involves not only checking that the full computation tree $T_{\mathcal{M}}$ obtained by unwinding $\mathcal{M}$ (which corresponds to the interaction of $\mathcal{M}$ with a maximal environment) satisfies the specification $\varphi$, but also that every tree obtained from it by pruning children of environment nodes (this corresponds to disable possible environment choices) satisfy $\varphi$. In \cite{KV97} module checking for finite-state systems has been extended to a setting where the environment has \emph{imperfect information} about the states of the system (see also \cite{Rei84,CH05} for related work regarding imperfect information). In this setting, every state of the module is a composition of \emph{visible} and \emph{invisible} variables where the latter are hidden to the environment. Thus, the composition of a module $\mathcal{M}$ with an environment with imperfect information corresponds to a tree obtained from $T_{\mathcal{M}}$ by pruning children of environment nodes in such a way that the pruning is consistent with the partial information available to the environment. One of the results in \cite{KV97} is that $\CTL$ finite-state module checking with imperfect information has the same complexity as $\CTL$ finite-state module checking with perfect information, i.e., it is \EXPTIME-complete, but its \emph{program complexity} (i.e., the complexity of the problem in terms of
the size of the system) is exponentially harder, i.e. \EXPTIME-complete. \vspace{0.2cm}

\noindent \textbf{Pushdown module checking.} An active field of research is model­-checking of  pushdown
systems. These represent an infinite-state formalism suitable to
model the control flow of recursive sequential programs.
The model checking problem of (closed) pushdown systems against standard regular temporal logics (such as $\LTL$, $\CTL$,  $\CTLSTAR$, or the modal  $\mu$-calculus)
 is decidable and it has been intensively studied in
recent years leading to efficient verification algorithms and
tools (see for example~\cite{Wal96,BEM97,BR00}). Recently, in~\cite{BMP10,AMV07,FMP07},  the module checking framework has been extended to the class of \emph{open pushdown systems} (\OPD), i.e. pushdown systems in which the set of configurations is partitioned (in accordance with the control state and the symbol on the top of the stack) into a set of \emph{system configurations} and a set of \emph{environment configurations}. \emph{Pushdown module checking} (\PMC, for short) against standard branching temporal logics, like $\CTL$ and $\CTLSTAR$, has been investigated both in the context of environments with perfect information~\cite{BMP10} and imperfect information~\cite{AMV07,FMP07}
about the system (in the last case, the environment has only a partial view of the system's control states and stack content). For the perfect information setting, as in the case of finite-state systems, \PMC is much harder than standard pushdown model checking for both $\CTL$ and $\CTLSTAR$. For example, for $\CTL$, while  pushdown model checking is \EXPTIME-complete \cite{Wal00}, \PMC with perfect information is \TWOEXPTIME-complete~\cite{BMP10} (however, the program complexities of the two problems are the same, i.e., \EXPTIME-complete~\cite{Boz07a,BMP10}).
For the imperfect information setting, \PMC against $\CTL$  is in general undecidable~\cite{AMV07}, and undecidability relies on hiding information about the stack content. 
The decidability status for the last problem restricted to the class of \OPDS where the \emph{stack content depth} is visible is left open in~\cite{AMV07}.
On the other hand, \PMC with imperfect information against $\CTL$ restricted to the class of \OPDS with imperfect information about the internal control states, but a visible stack content, is decidable  and has the same complexity as \PMC with perfect information. However, its program complexity  is open: it lies somewhere between \EXPTIME\ and \TWOEXPTIME~\cite{AMV07}.\vspace{0.2cm}

\noindent \textbf{Our contribution.} We establish new results on \PMC with imperfect information against $\CTL$. Moreover, we also consider a subclass of \OPDS, we call \emph{stable} \OPDS, 
where the transition relation is consistent with the partial information  available to the environment. Our main results are the following.
\begin{itemize}
  \item The \emph{program complexity} of \PMC with imperfect information against $\CTL$ restricted to the class of \OPDS with \emph{visible stack content} is
  \TWOEXPTIME-hard,\footnote{hence,  \TWOEXPTIME-complete, since \PMC with imperfect information against $\CTL$ restricted to the class of \OPDS with \emph{visible stack content} is known to be \TWOEXPTIME-complete~\cite{AMV07}} even for a fixed formula of the existential fragment $\ECTL$ of $\CTL$ (hence, exponentially harder than the program complexity of \PMC with perfect information against $\CTL$, which is known to be \EXPTIME-complete~\cite{BMP10}). The result is obtained by a polynomial-time reduction from the acceptance problem for \EXPSPACE-bounded Alternating Turing Machines, which is known to be \TWOEXPTIME-complete \cite{CKS81}.
  \item \PMC with imperfect information against $\CTL$ restricted to the class of \OPDS with \emph{visible stack content depth} is undecidable, even if the $\CTL$ formula is assumed to be in the fragment  of $\CTL$ using only temporal modalities $\E\F$ and $\E\X$, and their duals, and the \OPD is assumed to be \emph{stable} and having only environment configurations. The result is obtained by a reduction from  the Post's Correspondence Problem, a well known undecidable problem \cite{HU79}.
  \item \PMC with imperfect information against the \emph{existential fragment} $\ECTL$ of $\CTL$ restricted to the class of \emph{stable} \OPDS with \emph{visible stack content depth} and having only environment configurations is instead decidable and in \TWOEXPTIME.
The result is proved by a reduction to non-emptiness of B\"{u}chi alternating visible pushdown automata (\AVPA) \cite{Boz07}, which is  \TWOEXPTIME-complete \cite{Boz07}.
\end{itemize}

The full version of this paper can be asked to the author by e-mail.

\section{Preliminaries}

Let $\N$ be the set of natural numbers. A tree $T$ is a prefix closed subset of $\N^{*}$.
The elements of $T$ are called \emph{nodes} and the empty word $\varepsilon$ is the \emph{root} of $T$. For $x\in T$, the set of \emph{children} of $x$ (in $T$) is $\children(T,x)=\{x\cdot i\in T\mid i\in \N\}$. For $x\in T$, a (full) \emph{path} of $T$ from $x$ is a maximal sequence $\pi=x_1,x_2,\ldots$ of nodes in $T$ such that $x_1=x$ and for each $1\leq i<|\pi|$, $x_{i+1}\in\children(T,x_i)$ .
 In the following, for a path of $T$, we mean a path of $T$ from the root $\varepsilon$. For an alphabet $\Sigma$, a $\Sigma$-labeled tree is a pair $\tpl{T,V}$, where $T$ is a tree and
$V:T\rightarrow \Sigma$ maps each node of $T$ to a symbol in $\Sigma$. Given two $\Sigma$-labeled trees $\tpl{T,V}$ and $\tpl{T',V'}$, we say that
$\tpl{T,V}$ \emph{is contained in} $\tpl{T',V'}$ if $T\subseteq T'$ and $V'(x)=V(x)$ for each $x\in T$.   In order to simplify the notation, sometimes we write simply $T$ to denote a $\Sigma$-labeled tree $\tpl{T,V}$.

\subsection{Module checking with imperfect information}

In this paper we consider \emph{open systems}, i.e. systems that interact
with their environment and whose behavior depends on this
interaction.
Moreover, we consider the case where the environment has imperfect information about the states of the system.
This is modeled by an equivalence relation $\cong$ on the set of states. States that are indistinguishable by the environment, because the difference between them is kept invisible by the system, are equivalent according to $\cong$. We describe an open system by an
 \emph{open} Kripke structure (called also \emph{module}
\cite{KVW01}) $\mathcal{M} = \tpl{AP,S=S_{sy}\cup S_{en},  s_0, R, L,\cong}$,
where $AP$ is a finite set of atomic propositions, $S$ is a (possibly infinite)  set of  states partitioned into a set $S_{sy}$
of \emph{system} states and a set $S_{en}$ of \emph{environment}
states, and $s_0\in S$ is a designated
initial state. Moreover, $R\subseteq
S\times S$ is a  transition relation,  $L:S\rightarrow 2^{AP}$ maps each state $s$
to the set of atomic propositions that hold in $s$, and $\cong$ is an equivalence relation on the set of states $S$. Since the designation of a state as an environment state is obviously known to the environment, we require that for all states $s,s'$ such that $s\cong s'$, $s\in S_{en}$ iff $s'\in S_{en}$. For each $s\in S$, we denote by $\vis(s)$ the equivalence class of $s$ w.r.t.~$\cong$. Intuitively, $\vis(s)$ represents what the environment ``sees" of $s$. A successor of $s$ is a state $s'$ such that $(s,s')\in R$. State $s$ is \emph{terminal} if it has no successor.
 When the module $\mathcal{M}$ is in
a non-terminal \emph{system} state $s\in S_{sy}$, then all the successors of $s$ are possible next states. On the other hand, when $\mathcal{M}$
is in a non-terminal \emph{environment} state $s\in S_{en}$, then the environment decides, based on the visible part of each successor of $s$, and of the history of the computation so far, to which of the successor states the computation can proceed, and to which it can not.
Additionally, we consider environments that cannot block the system, i.e. not all the transitions from a non-terminal environment state are disabled.
For a state $s$ of $\mathcal{M}$, let $T_{\mathcal{M},s}$ be the \emph{computation tree of $\mathcal{M}$ from $s$}, i.e. the $S$-labeled tree
obtained by unwinding $\mathcal{M}$ starting from $s$ in the usual way. Note that $T_{\mathcal{M},s}$ describes the behavior of $\mathcal{M}$  under the \emph{maximal} environment, i.e. the environment that never restricts the set of next states. The behavior of $\mathcal{M}$  under a specific environment (possibly different from the maximal one) is formalized by the notion of \emph{strategy tree} as follows.
For a node $x$ of the computation tree $T_{\mathcal{M},s}$, let $s_1,\ldots,s_p$ be the sequence of states labeling the partial path from the root to node $x$. We denote by $\vis(x)$ the sequence $\vis(s_1),\ldots,\vis(s_p)$, which represents the visible part of the (partial) computation $s_1,\ldots,s_p$ associated with node $x$.
A \emph{strategy tree from $s$} is a $S$-labeled tree obtained from the computation tree
$T_{\mathcal{M},s}$
by pruning from
$T_{\mathcal{M},s}$ subtrees whose roots are children  of
nodes labeled by  environment states. Additionally, we require that
such a pruning is consistent with the partial information available to the environment: if two nodes $x_1$ and $x_2$ of $T_{\mathcal{M},s}$ are indistinguishable, i.e. $\vis(x_1)=\vis(x_2)$, then the subtree rooted at $x_1$ is pruned iff the subtree rooted at $x_2$ is pruned as well.
Formally, a strategy tree of $\mathcal{M}$ from a state $s\in S$ is a $S$-labeled tree $ST$ such that $ST$ is contained in $T_{\mathcal{M},s}$ and the following holds:
\begin{itemize}
  \item for each node $x$ of $ST$ labeled by a \emph{system} state, $\children(ST,x)=\children(T_{\mathcal{M},s},x)$;
  \item for each node $x$ of $ST$ labeled by an \emph{environment} state, \mbox{$\children(ST,x)\neq \emptyset$ if $\children(T_{\mathcal{M},s},x)\neq\emptyset$;}
  \item for all nodes $x_1$ and $x_2$ of $T_{\mathcal{M},s}$ such that $\vis(x_1)=\vis(x_2)$, $x_1$ is a node of $ST$ iff $x_2$ is a node of $ST$.
  Note that if $x_1$ is a child of an environment node, then so is $x_2$.
\end{itemize}

\noindent For a   node $x$ of $ST$, $\state(x)$ denotes the $S$-state labeling $x$. A strategy tree of $\mathcal{M}$ is a strategy tree of $\mathcal{M}$ from the initial state.   In the following, a strategy tree $ST$ is seen as a $2^{AP}$-labeled tree, i.e. taking the
label of a node $x$ to be $L(\state(x))$. We also consider a restricted class of modules. A module $\mathcal{M}$ is \emph{stable} (\emph{w.r.t.~visible information}) iff for all states $s_1$ and $s_2$ s.t. $\vis(s_1)=\vis(s_2)$ and both $s_1$ and $s_2$ have some successor, it holds that:
for each successor $s'_1$ of $s_1$, there is a successor $s'_2$ of $s_2$ s.t. $\vis(s'_1)=\vis(s'_2)$. Note that this notion is similar to that given in \cite{Rei84} for standard imperfect information games.\vspace{0.1cm}


\noindent \textbf{$\CTL$ Module Checking:} as specification logical language, we consider
the standard branching temporal logic $\CTL$~\cite{CE81}, whose formulas $\varphi$ over   $AP$  are assumed to be in positive normal form, i.e. defined as: \vspace{0.1cm}

\text{\hspace{0.2cm}}$\varphi:= \mathtt{true}\, | \, prop \, | \, \neg prop \, | \, \varphi \vee\varphi \, | \, \varphi \wedge\varphi \, | \, \E\X\varphi\, | \, \Aop\X\varphi \, | \, \E(\varphi \U\varphi) \, | \, \Aop(\varphi \U\varphi)\, | \, \E(\varphi \widetilde{\U}\varphi) \, | \, \Aop(\varphi \widetilde{\U}\varphi)$\vspace{0.1cm}

\noindent where $prop\in AP$, $\E$ (resp., $\Aop$) is the existential (resp., universal) path quantifier, 
$\X$ and $\U$ are the next and until temporal operators, and $\widetilde{\U}$ is the dual of 
$\U$.
We  use classical shortcuts: $\E\F\varphi$ is for $\E(\mathtt{true}\U \varphi)$ (``existential eventually") and
$\Aop\F\varphi$ is for $\Aop(\mathtt{true}\U \varphi)$ (``universal eventually"), and their duals
$\Aop\G\varphi:=\neg\E\F\neg\varphi$ 
and $\E\G\varphi:=\neg\Aop\F\neg\varphi$. 
We also consider  the universal (resp., existential) fragment $\ACTL$ (resp., $\ECTL$) of $\CTL$ obtained by disallowing the existential (resp., universal) path quantifier, and the fragment $\CTL(\E\F,\E\X,\Aop\G,\Aop\X)$ using only temporal modalities $\E\F$ and $\E\X$, and their duals.
For a definition of the  semantics of $\CTL$ (which is given with respect to $2^{AP}$-labeled trees) see~\cite{CE81}.

For a module $\mathcal{M}$ and a $\CTL$ formula $\varphi$ over $AP$,  $\mathcal{M}$ \emph{reactively satisfies} $\varphi$, denoted $\mathcal{M}\models_r\varphi$, if all the strategy trees of $\mathcal{M}$ (from the initial state) satisfy $\varphi$.   
Note that $\mathcal{M} \not\models_r \varphi$ is
\emph{not} equivalent to $\mathcal{M} \models_r \neg\varphi$. Indeed, $\mathcal{M}
\not\models_r \varphi$ just states that there is some strategy tree $ST$  satisfying $\neg\varphi$.

\subsection{Pushdown Module Checking with Imperfect Information}

In this paper we consider Modules induced by Open Pushdown Systems (\OPD, for short),
i.e., Pushdown systems where the set of configurations is partitioned (in accordance with the
control state and the symbol on the top of the stack) into a set of environment configurations
and a set of system configurations.

An \OPD is a tuple $\System =\tpl{AP,	Q,q_0,\Gamma,\flat,\Delta,\mu,Env}$, where $AP$ is a finite set of propositions,
$Q$	is a finite set of control states, $q_0\in Q$ is the initial control state, $\Gamma$ is a finite stack alphabet,
 $\flat\notin \Gamma$ is the \emph{special stack
bottom symbol}, $\Delta\subseteq (Q\times Q)\cup (Q\times Q\times \Gamma)\cup (Q\times (\Gamma\cup \{\flat\})\times Q)$ is the transition relation,
 $\mu:Q\times (\Gamma\cup\{\flat\})\rightarrow 2^{AP}$ is a labeling function, and
 $Env\subseteq Q\times (\Gamma\cup\{\flat\})$ is used to  specify the set of environment configurations.
  A transition of the form $(q,q',\gamma)$, written $q\der{\mathtt{push}(\gamma)}q'$, is a push transition, where $\gamma\neq \flat$ is pushed onto the stack  (and the control changes from $q$ to $q'$). A transition of the form $(q,\gamma,q')$, written $q\der{\mathtt{pop}(\gamma)}q'$, is a pop transition, where $\gamma$ is popped from the stack. 
   Finally, a transition of the form $(q,q')$, written $q\der{}q'$, is an \emph{internal} transition, where the stack is not used.
  We assume that
 $Q\subseteq 2^{I\cup H}$, where $I$ and $H$ are disjoint finite sets of \emph{visible} and \emph{invisible} \emph{control variables},
 and $\Gamma \subseteq 2^{I_\Gamma\cup H_\Gamma}$, where $I_\Gamma$ and $H_\Gamma$ are disjoint finite sets of \emph{visible} and \emph{invisible} \emph{stack content variables}. 

 A \emph{configuration or state} of $\System$ is a pair $(q,\alpha)$, where
$q \in Q$  and $\alpha\in \Gamma^{*}\cdot\flat$ is a stack content.
We denote by $\TOP(\alpha)$ the \emph{top of the stack content $\alpha$}, i.e. the leftmost symbol of $\alpha$.
For a control  state $q\in Q$, \emph{the visible part of $q$} is $\vis(q)=q\cap I$. For a stack symbol $\gamma\in \Gamma$, if $\gamma\subseteq H_\Gamma$
and $\gamma\neq \emptyset$, we set $\vis(\gamma)=\varepsilon$, otherwise we set $\vis(\gamma)=\gamma\cap I_\Gamma$. By setting $\vis(\gamma)=\varepsilon$ whenever $\gamma$ consists entirely of invisible variables, we allow the system to completely hide a push operation. 
The \emph{visible part of a configuration $(q,\alpha)$} is $(\vis(q),\vis(\alpha))$, where  for $\alpha=\gamma_0\ldots \gamma_n\cdot\flat$,
$\vis(\alpha)=\vis(\gamma_0)\ldots \vis(\gamma_n)\cdot\flat$. The \emph{stack content} (resp., the \emph{control}) \emph{is visible} if $H_\Gamma=\emptyset$ (resp.,
$H=\emptyset$). Moreover, the \emph{stack content depth is visible} if $\vis(\gamma)\neq \varepsilon$ for each stack symbol $\gamma\in \Gamma$.
Since  the designation of an \OPD state as an environment state is known to the environment, we require that for all states $(q,\alpha)$ and $(q',\alpha')$ such that $(\vis(q),\vis(\TOP(\alpha)))=(\vis(q'),\vis(\TOP(\alpha')))$,  $(q,\TOP(\alpha))\in Env$ iff
$(q',\TOP(\alpha'))\in Env$.
The \OPD $\System$ induces an infinite-state module
\mbox{$\mathcal{M}_\System = \tpl{AP,S=S_{sy}\cup S_{en},  s_0, R, L,\cong}$,} defined as follows:
\begin{itemize}
  \item $S_{sy}\cup S_{en}$ is the set of configurations of $\System$, and $S_{en}$ is the set of states $(q,\alpha)$ s.t. $(q,\TOP(\alpha))\in Env$;
  \item $s_0=(q_0,\flat)$ is the initial configuration (initially, the stack is empty);
  \item $((q,\alpha),(q',\alpha'))\in R$ iff: or (1) $q\der{}q'\in\Delta$ and $\alpha'=\alpha$, or (2) \mbox{$q\der{\mathtt{push}(\gamma)}q'\in\Delta$} and $\alpha'=\gamma\cdot\alpha$, or (3)
$q\der{\mathtt{pop}(\gamma)}q'\in\Delta$, and either $\alpha'=\alpha=\gamma=\flat$ or $\gamma\neq\flat$ and $\alpha=\gamma\cdot\alpha'$
  (note that every pop transition that removes $\flat$ also pushes it back);
  \item $L((q,\alpha))=\mu((q,\TOP(\alpha)))$ for all $(q,\alpha)\in S$;
 \item for all  $(q,\alpha),(q',\alpha')\in S$, we have that $(q,\alpha)\cong (q',\alpha')$ iff $(\vis(q),\vis(\alpha))=(\vis(q'),\vis(\alpha'))$.
\end{itemize}

\noindent A strategy tree of $\mathcal{S}$ is a strategy tree of $\mathcal{M}_{\mathcal{S}}$ from the initial state. Given  $(q,\gamma)\in Q\times (\Gamma\cup \{\flat\})$,  $(q,\gamma)$ is \emph{non-terminal} (w.r.t.~$\mathcal{S}$) iff: or $q\der{}q'\in\Delta$ or $q\der{\mathtt{pop}(\gamma)}q'\in\Delta$ or
$q\der{\mathtt{push}(\gamma')}q'\in\Delta$ for some $q'\in Q$ and $\gamma'\in\Gamma$.
Note that a state $(q,\alpha)$ of $\mathcal{S}$ has some successor (in $\mathcal{M}_\System$) iff   $(p,\TOP(\alpha))$ is non-terminal.
We also consider a subclass of \OPD. An \OPD $\mathcal{S}=\tpl{AP,	Q,q_0,\Gamma,\flat,\Delta,\mu,Env}$ is \emph{stable} iff for all non-terminal pairs
$(q_1,\gamma_1),(q_2,\gamma_2)\in Q\times (\Gamma\cup \{\flat\})$ s.t. $\vis(q_1)=\vis(q_2)$ and $\vis(\gamma_1)=\vis(\gamma_2)$, the following holds:
\begin{itemize}
  \item if $q_1\der{}q'_1\in \Delta$, then there is $q_2\der{}q'_2\in \Delta$ such that $\vis(q'_1)=\vis(q'_2)$;
  \item if $q_1\der{\mathtt{push}(\gamma)}q'_1\in \Delta$, then there is $q_2\der{\mathtt{push}(\gamma')}q'_2\in \Delta$ such that $\vis(q'_1)=\vis(q'_2)$ and $\vis(\gamma)=\vis(\gamma')$;
  \item if $q_1\der{\mathtt{pop}(\gamma_1)}q'_1\in \Delta$, then there is $q_2\der{\mathtt{pop}(\gamma_2)}q'_2\in \Delta$ such that $\vis(q'_1)=\vis(q'_2)$.
\end{itemize}

\begin{remark}\label{remark:stable} Note that for a \OPD $\mathcal{S}$ with visible stack content depth, $\mathcal{S}$ is stable iff  $\mathcal{M}_\System$ is stable.
\end{remark}

In the rest of this paper, we consider \OPD $\mathcal{S}$ where each state is labeled by a singleton in $2^{AP}$ (for a given set $AP$ of atomic propositions), hence, the strategy trees can be seen as $AP$-labeled trees.

The \emph{pushdown module checking problem $($\PMC$)$ with imperfect information against $\CTL$}  is to decide, for a given
\OPD $\System$ and a $\CTL$ formula $\varphi$, whether $\mathcal{M}_\System\models_r\varphi$.

\section{Pushdown module checking  for $\OPD$ with visible stack content}\label{sec:Problem1}

In this section, we prove the following result.
\begin{theorem}\label{theorem:programComplexity} The \emph{program complexity} of $\PMC$ with imperfect information against $\CTL$  restricted to the class of \OPDS with \emph{visible stack content} is  \TWOEXPTIME-hard, even for a fixed $\ECTL$ formula.\footnote{for program complexity, we mean the complexity of the problem in terms of the size of the $\OPD$, for a fixed $\CTL$ formula}
\end{theorem}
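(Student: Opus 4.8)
The plan is to give a polynomial-time reduction from the acceptance problem for \EXPSPACE-bounded Alternating Turing Machines (ATM), which is \TWOEXPTIME-complete \cite{CKS81}. Given such an ATM $M$ and an input $x$ with $|x|=n$, the machine $M$ works in space $2^{p(n)}$ for a fixed polynomial $p$, so a configuration of $M$ is a tape word of length $N=2^{m}$ with $m=p(n)$, and each cell position is an $m$-bit address. I would construct, in time polynomial in $|M|+n$, an \OPD $\System$ with \emph{visible stack content} (so that $H_\Gamma=\emptyset$ and only the control states may carry invisible variables) together with a \emph{fixed} \ECTL formula $\varphi$, such that $M$ accepts $x$ iff $\mathcal{M}_\System\models_r\varphi$. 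Since $\varphi$ is fixed and $\System$ has polynomial size, this yields \TWOEXPTIME-hardness of the program complexity. (As \TWOEXPTIME\ is closed under complement, it is harmless if the construction more naturally produces $\mathcal{M}_\System\not\models_r\varphi$ in the accepting case, using a dual.)

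\textbf{Encoding the alternating computation.} First I would make $\mathcal{M}_\System$ generate candidate accepting computation trees of $M$, reading the alternation off the module-checking game: the existential configurations of $M$ are encoded at \emph{system} states, where all successors are present in every strategy tree and the existential path quantifier of $\varphi$ selects an accepting branch, while the universal configurations of $M$ are encoded at \emph{environment} states whose successors are made visibly distinct, so that the universal ``for all strategy trees'' quantifier of $\models_r$ exercises each universal branching (across the different prunings that isolate each successor). A single configuration of $M$, being of exponential length $N$, is not stored in the control but is produced cell by cell along a path, using the stack to lay out the configuration and to maintain the running $m$-bit cell address as a \emph{visible} counter incremented from $0$ to $N-1$; looping keeps $\System$ of polynomial size while letting the generated trees be arbitrarily deep, the stack depth realising the (doubly-exponential) length of the simulated run. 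Thus the stack contributes one exponential (as in the \EXPTIME\ perfect-information program complexity) and imperfect information will contribute the second. The formula $\varphi$ is then essentially of the form ``there is a path along which no error flag is ever raised and an accepting configuration is eventually reached'', i.e. an $\E(\psi_{\mathrm{ok}}\U \psi_{\mathrm{acc}})$ statement with $\psi_{\mathrm{ok}},\psi_{\mathrm{acc}}$ fixed Boolean combinations of atomic propositions, independent of $M$ and $x$.

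\textbf{Main obstacle: checking transition consistency.} The hard part is verifying, with only polynomially many control states, that two consecutive exponential-length configurations $c_i$ and $c_{i+1}$ respect the transition relation of $M$, i.e. that the content of every cell at address $a$ in $c_{i+1}$ is the one dictated by the cells at addresses $a-1,a,a+1$ of $c_i$. Storing an explicitly challenged address $a$ would cost $2^{m}$ states; this is exactly where imperfect information about the control buys the extra exponential over the \EXPTIME\ perfect-information program complexity. I would keep the challenged address in \emph{invisible} control variables so that the environment cannot base its pruning on it, while the cell contents and the visible address counter on the (fully visible) stack remain observable. The consistency requirement on strategy trees --- that indistinguishable nodes be pruned uniformly (third bullet of the definition of strategy tree) --- then forces the environment to treat all $2^m$ candidate addresses coherently: a system that cheats at some cell cannot hide the discrepancy under a uniform environment strategy, so an error flag is raised on the challenged path and $\varphi$ fails there, whereas an honest simulation survives every uniform pruning.

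\textbf{Putting it together.} Finally I would verify the two directions. If $M$ accepts $x$, I take an accepting run-tree of $M$; the induced strategy trees raise no error flag and reach accepting configurations, so $\E(\psi_{\mathrm{ok}}\U \psi_{\mathrm{acc}})$ holds in each of them, giving $\mathcal{M}_\System\models_r\varphi$. If $M$ rejects $x$, then along every candidate computation either some existential choice has no accepting continuation or some configuration is corrupted, and the imperfect-information challenge exposes a discrepancy on some strategy tree, which violates $\varphi$. The delicate points I would treat carefully are the precise invisible/visible splitting of the control variables that realises the address challenge and the uniformity it enforces, and the check that $\varphi$ remains a single fixed \ECTL formula; the polynomial size bound on $\System$ and the polynomial running time of the reduction are then immediate.
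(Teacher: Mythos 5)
Your overall frame matches the paper: a polynomial-time reduction from acceptance of \EXPSPACE-bounded alternating Turing machines, a fixed formula obtained by dualization (the paper builds a fixed \ACTL formula $\varphi$ and shows $\mathcal{T}$ accepts $w_{in}$ iff \emph{some} strategy tree satisfies $\varphi$, which is the complement of $\mathcal{M}_\System\models_r\neg\varphi$ with $\neg\varphi\in\ECTL$), configurations of length $2^n$ laid out on the visible stack block by block with $n$-bit cell addresses, and alternation handled by the system/environment partition (the paper assigns existential TM choices to environment pruning and universal ones to system states, the exact dual of your assignment, which is fine). However, your central gadget has a genuine gap. You write that you would ``keep the challenged address in \emph{invisible} control variables so that the environment cannot base its pruning on it.'' Invisibility hides information from the environment, but it buys no succinctness: an \OPD is given with its state set $Q$ and transition relation $\Delta$ listed explicitly, so a control component remembering an arbitrary $m$-bit address needs $2^m$ explicitly enumerated states and transitions --- exactly the exponential blow-up you yourself flagged one sentence earlier (``storing an explicitly challenged address $a$ would cost $2^m$ states''). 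Since the stack is fully visible by hypothesis, the address cannot be hidden there either, so as stated your consistency check cannot be realized by a polynomial-size \OPD, and the reduction collapses.

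The paper's solution to precisely this problem is a distribution trick that your proposal is missing. When the main (popping) copy reaches a block $bl_1$, it spawns, by \emph{internal} nondeterminism, $n$ auxiliary $check_2$-copies, where the $i$-th copy memorizes only the $i$-th address bit of $bl_1$ (plus the content triple) --- $O(n)$ states in total, since each copy stores an index and a bit, never the whole address. The $n$ copies are made visibly indistinguishable ($\vis(p)=\mu(p)$, same label, same visible history, same stack content), so when the environment, by \emph{external} nondeterminism, chooses at which block $bl_2$ of the previous configuration to start the $select_2$-comparison, the uniform-pruning condition on strategy trees forces the \emph{same} choice of $bl_2$ for all $n$ copies. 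The fixed formula $\Aop\G\bigl(check_2\rightarrow[\ldots\wedge\Aop\F\,good_2]\bigr)$ then requires every copy to verify its own bit-equality and the $next_{dir}$ consistency, and the conjunction of the $n$ bit-equalities on a single common block pins down the exponential address with only polynomially many states. Your intuition that uniform pruning must ``treat all $2^m$ candidate addresses coherently'' gestures at this, but without the bit-per-indistinguishable-copy mechanism (or an equivalent one) the key step of the proof is not established.
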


Theorem~\ref{theorem:programComplexity}  is proved by a
polynomial-time reduction from the acceptance problem for \EXPSPACE-bounded \emph{alternating Turing Machines} (TM) with a binary branching degree, which is known to be \TWOEXPTIME-complete \cite{CKS81}.
In the rest of this section, we fix  such a TM machine $\mathcal{T}=\tpl{A,Q=Q_{\forall}\cup Q_{\exists},q_0,\delta,F}$, where $A$ is the input alphabet containing  the blank symbol $\#$,
$Q_\exists$ (resp., $Q_{\forall}$) is the set of existential (resp., universal) states,   $q_0$ is the initial state,   $\delta: Q\times A\rightarrow (Q \times A \times
\{\leftarrow,\rightarrow\})\times(Q \times A \times \{\leftarrow,\rightarrow\})$ is the transition function, and $F\subseteq Q$ is the set  of
accepting states. Thus, in each step, $\mathcal{T}$ overwrites the tape cell being scanned, and the tape head moves one position to the left ($\leftarrow$) or right ($\rightarrow$).
We fix an input $w_{in}\in A^{*}$ and consider the  parameter $n=|w_{in}|$ (we assume that $n>1$).
Since $\mathcal{T}$ is \EXPSPACE-bounded, we can assume that $\mathcal{T}$ uses exactly  $2^{n}$ tape cells when started on the input $w_{in}$. Hence,
a  TM configuration (of $\mathcal{T}$ over $w_{in}$)  is  a word $C=w_1\cdot (a,q)\cdot w_2\in A^*\cdot(A\times Q)\cdot
A^*$ of length exactly $2^{n}$ denoting that the tape content is $w_1\cdot a\cdot w_2$, the current state is $q$, and the tape head is at position $|w_1|+1$.  $C$ is \emph{accepting} if the associated state $q$ is in $F$. We denote by $succ_L(C)$ (resp., $succ_R(C)$) the TM successor of $C$ obtained by choosing the left (resp., right) triple in $\delta(q,a)$. The initial configuration $C_{in}$ is $(w_{in}(0),q_0),w_{in}(1),\ldots,w_{in}(n-1),\#,\#,\ldots,\#$, where the number of blanks at the right of $w_{in}(n-1)$ is  $2^{n}-n$ .
For a TM configuration $C=C(0),\ldots,C(2^{n}-1)$,  the
         `value' $u_i$ of the $i$-{th} symbol of $succ_L(C)$ (resp., $succ_R(C)$) is completely determined by the values
         $C(i-1)$, $C(i)$ and $C(i+1)$ (taking $C(i+1)$ for $i=2^{n}-1$ and
         $C(i-1)$ for $i=0$ to be some special symbol, say $\bot$). We denote by
         $next_L(C(i-1),C(i),C(i+1))$ (resp.,
         $next_R(C(i-1),C(i),C(i+1))$) our
         expectation for $u_i$ (these functions can be trivially obtained from the transition function  $\delta$ of
         $\mathcal{T}$).

We prove the following result, hence, Theorem~\ref{theorem:programComplexity} follows (note that $\ECTL$ is the dual of $\ACTL$).

\begin{theorem}\label{theorem:mainLowerBound} One can construct in polynomial time $($in the sizes of $\mathcal{T}$ and  $w_{in}$$)$ an \OPD $\mathcal{S}$ with \emph{visible stack content}  such that \emph{$\mathcal{T}$ accepts $w_{in}$} iff there is a strategy tree
of  $\mathcal{S}$ satisfying a \emph{fixed} computable $\ACTL$ formula $\varphi$ $($independent on $\mathcal{T}$ and  $w_{in}$$)$. 
\end{theorem}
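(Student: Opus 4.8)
The plan is to encode an accepting computation tree of the alternating Turing machine $\mathcal{T}$ on $w_{in}$ into a strategy tree of an \OPD $\mathcal{S}$, exploiting imperfect information to force the strategy tree to represent a \emph{faithful} computation. The fundamental idea is that a single TM configuration $C$ of length $2^n$ is encoded as a stack content: pushing one stack symbol per tape cell, where each symbol records both the cell's content $C(i)\in A\cup(A\times Q)$ and an $n$-bit address $i$ encoding its position. Because the stack has \emph{visible content}, the environment can observe the full configuration currently encoded. A branching step of $\mathcal{T}$ at a universal (resp.\ existential) state is then modeled by popping the old configuration and pushing a new one; universal branching becomes branching at \emph{environment} nodes (so every choice must be taken and must lead to acceptance), while existential branching becomes branching at \emph{system} nodes (so the strategy need only keep one successful choice). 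The fixed \ACTL formula $\varphi$ will assert, along every path of the strategy tree, that the encoded configurations form a legal accepting run.

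First I would design the control states and stack alphabet of $\mathcal{S}$ so that $\mathcal{S}$ can nondeterministically guess a successor configuration $C'$ of the currently stored configuration $C$, cell by cell, and so that the visible label of each state records the local triple $(C(i-1),C(i),C(i+1))$ together with the guessed new value $u_i$ and the choice $L$ or $R$. The polynomial bound is respected because the address counter uses $n$ bits, so all of $Q$, $\Gamma$, and $\Delta$ have size polynomial in $n$ and in $|\mathcal{T}|$; crucially the exponential width $2^n$ of a configuration is realized only in the \emph{unbounded stack}, never in the finite control. Next I would use the imperfect-information mechanism: the consistency condition on strategy trees ($\vis(x_1)=\vis(x_2)$ implies $x_1\in ST \Leftrightarrow x_2\in ST$) together with the ability to hide the invisible control/stack bits lets the reduction force the environment's pruning to be \emph{uniform} across indistinguishable branches. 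This is exactly the feature that boosts the power of the construction: it lets a single fixed formula, rather than one tailored to $\mathcal{T}$, detect any cell at which the guessed successor violates $next_L$ or $next_R$. The fixed formula $\varphi$ (stated in \ACTL, equivalently its dual \ECTL for the original theorem) then needs only to express: (i) the encoded run starts from the initial configuration $C_{in}$; (ii) each transition respects $\delta$ locally; and (iii) every maximal path eventually reaches an \emph{accepting} configuration. Point (ii) is phrased as a universal statement ruling out any ``error'' propositions that the \OPD is engineered to raise whenever a local consistency check fails.

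I would then prove the two directions of the iff. For the forward direction, given an accepting computation tree of $\mathcal{T}$, I construct a strategy tree of $\mathcal{S}$ by choosing, at each existential (system) node, the stack-update transitions that correctly compute the accepting successor, and by verifying that at universal (environment) nodes all mandated successors appear; the resulting tree satisfies $\varphi$ because every local check passes and every branch reaches an accepting configuration. For the converse, given a strategy tree $ST$ satisfying $\varphi$, I extract a computation tree of $\mathcal{T}$ by reading off the encoded configurations along each path; the heart of the argument is that $\varphi$ together with the stability/imperfect-information constraints guarantees that the guessed successor values genuinely equal $next_L/next_R$ at \emph{every} cell, so no bogus configuration can slip through. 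I expect the main obstacle to be exactly this soundness step: ensuring that a \emph{fixed}, $\mathcal{T}$-independent formula can enforce the local TM transition rule across an exponentially long configuration stored on the stack. The key technical device here is to let the environment ``challenge'' an arbitrary cell of the claimed successor — the imperfect information forces the system to commit to the same local behavior on all indistinguishable cells, so a violation at any single cell is necessarily exposed by the one universal \ACTL check, without the formula having to mention cell addresses or the size $2^n$ explicitly.
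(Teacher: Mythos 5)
Your high-level plan (configurations on the stack, imperfect information forcing uniform environment behaviour, a fixed \ACTL formula) points in the right direction, but two structural choices would fail as stated, and the one step you yourself flag as the obstacle is left without a working mechanism. First, encoding a TM step by ``popping the old configuration and pushing a new one'' is unworkable: once the $2^n$ symbols of $C$ are popped, they cannot be remembered in the polynomial-size finite control, so neither the \OPD nor the fixed formula has any access to $C$ when the successor is pushed, and local consistency via $next_L/next_R$ cannot be checked against anything. The paper avoids this by never popping during generation: a first component $\mathcal{S}_0$ has \emph{only push} transitions, so the entire sequence of configurations along a branch (a ``pseudo sequence-code'') accumulates on the stack, and all checking is done afterwards by separate pop-only components $\mathcal{S}_1$ and $\mathcal{S}_2$ started from the terminal state of $\mathcal{S}_0$. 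Second, your assignment of branching modes is inverted with respect to strategy-tree semantics: strategy trees keep \emph{all} children of system nodes and may prune children only of \emph{environment} nodes, and since the fixed formula is \ACTL (no $\E$ quantifier available to resolve internal choices), existential TM choices must be realized as environment-prunable branching and universal TM choices as system branching with both children forced to remain --- the opposite of what you wrote.

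The deeper gap is the soundness step you correctly identify but then dispatch with ``the environment challenges an arbitrary cell and imperfect information exposes any violation.'' The hard point is not detecting a bad triple once the right pair of cells is in hand; it is forcing the challenged block $bl_1$ in one configuration to be matched with the block $bl_2$ of the \emph{same $n$-bit address} in the adjacent configuration, using a machine of polynomial size and a formula independent of $n$. The paper's device for this is concrete: the $main_2$-copy spawns, by internal nondeterminism, $n$ check-copies, the $i$-th remembering the $i$-th address bit of $bl_1$ (the index $i$ sits in \emph{invisible control variables}, so all $n$ copies are indistinguishable to the environment); each copy then lets the \emph{environment} select, by external nondeterminism, a block of the previous configuration, and since the visible histories of the $n$ copies coincide, the strategy-tree consistency condition forces the environment to select the \emph{same} block $bl_2$ in all copies. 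The fixed formula $\Aop\G\bigl(check_2 \rightarrow [((\Aop\X\,check_2)\vee(\Aop\X\,select_2))\wedge\Aop\F\,good_2]\bigr)$ then requires every copy's bit test to succeed, which pins $\ID(bl_2)=\ID(bl_1)$ without the formula ever mentioning addresses. Nothing equivalent appears in your proposal, so the converse direction of the iff does not go through. Note also that your phrase ``hide the invisible control/stack bits'' is inconsistent with the theorem's hypothesis of \emph{visible stack content}: only control-state variables may be invisible here, which is precisely why the bit-index-in-the-control trick is the (essentially forced) shape of the solution.
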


In the following, first we describe a suitable encoding of acceptance of $\mathcal{T}$ over $w_{in}$. Then, we illustrate the construction of the \OPD of Theorem~\ref{theorem:mainLowerBound} based on this encoding.\vspace{0.15cm}

\noindent \textbf{Preliminary step: encoding of acceptance of $\mathcal{T}$ over $w_{in}$}. We use the following set $\Gamma$ of symbols (which will correspond to the stack alphabet of the \OPD $\mathcal{S}$ of Theorem~\ref{theorem:mainLowerBound}):\footnote{Since the stack content of $\mathcal{S}$ is visible, we assume that each stack symbol in $\Gamma$  consists exactly of a visible stack content variable. Hence, we identify the set $\Gamma$ of stack symbols with the set of visible stack content variables.}\vspace{0.1cm}

\noindent \text{\hspace{4.3cm}}$\Gamma = \Lambda\cup \{L,R,0,1,\exists,\forall\}\cup (\{\natural\}\times \{\bot,1,\ldots,n\})$\vspace{0.1cm}

\noindent where $\Lambda$ consists of the triples $(u_p,u,u_s)$ such that $u\in A\cup (A\times Q)$ and $u_p,u_s\in A\cup (A\times Q)\cup \{\bot\}$. Intuitively, $u_p,u,u_s$ represent three consecutive symbols in a TM configuration $C$, where $u_p=\bot$ (resp., $u_s=\bot$) iff $u$ is the first (resp., the last) symbol of $C$.  First, we describe the encoding of TM configurations $C=C(0),\ldots,C(2^{n}-1)$ by finite words over $\Gamma$. Intuitively, the encoding of $C$ is a sequence of $2^{n}$ blocks, where  the $i$-th block ($0\leq i\leq 2^{n}-1$) keeps tracks of the triple $(C(i-1),C(i),C(i+1))$ and the binary code of position $i$ (cell number). Note that the cell numbers are in
the range $[0,2^n-1]$ and can be encoded by using $n$ bits. Formally, a \emph{TM block} is a word over $\Gamma$ of length $n+2$ of the form $bl= t,bit_1,\ldots,bit_n,(\natural,l_\bot)$, where $t\in\Lambda$, $bit_1,\ldots,bit_n\in \{0,1\}$, and $l_\bot$ is the position $i$ of the first bit $bit_i$ (from left to right) such that $bit_i=0$ if such a 0-bit exists, and $l_\bot=\bot$ otherwise. The \emph{content} $\CON(bl)$ of $bl$ is $t$ and the \emph{block number} $\ID(bl)$ of $bl$ is the integer in $[0,2^n-1]$ whose binary code is $bit_1,\ldots,bit_n$ (we assume that the first bit is the least significant one). Fix a \emph{pseudo} TM configuration $C=C(0),\ldots,C(k-1)$ with $k>1$, which is defined as a TM configuration with the unique difference that the length $k$ of $C$ is not required to be $2^{n}$. We say that $C$ is \emph{initial} if $C$ corresponds to the initial TM configuration $C_{in}$ with the unique difference that the number of blanks at the right of $w_{in}(n-1)$ is not required to be $2^{n}-n$. A TM \emph{pseudo code} of $C$ is a word  $w_C=bl_0\cdot\ldots\cdot bl_{k-1}\cdot tag$ over $\Gamma$  satisfying the following, where $C(-1),C(k)=\bot$:
\begin{itemize}
  \item $tag\in \{\exists,\forall\}$ and $tag=\exists$ iff $C$ is \emph{existential} (i.e., the associated TM state is in $Q_\exists$);
  \item each $bl_i$ is a TM block such that $\CON(bl_i)=(C(i-1),C(i),C(i+1))$;
  \item $\ID(bl_0)=0$ and $\ID(bl_{k-1})=2^{n}-1$. Moreover, for each $0\leq h<k-1$, $\ID(bl_{h})\neq 2^{n}-1$.
\end{itemize}
  If $k=2^{n}$ and additionally, for each $i$, $ID(bl_i)=i$, then we say that the word  $w_C$ is the TM \emph{code} of the TM configuration $C$.
Given a non-empty sequence $\nu=C_1,\ldots,C_p$ of pseudo TM configurations, a \emph{pseudo sequence-code of $\nu$} is a word over $\Gamma\cup \{\flat\}$ (recall that $\flat$ is the special bottom stack symbol of an \OPD) of the form $w_\nu=\flat\cdot w_{C_1}\cdot dir_2\cdot w_{C_2}\cdot\ldots\cdot dir_p\cdot w_{C_p}$ such that $dir_2,\ldots,dir_p\in \{L,R\}$ and each $w_{C_i}$ is a pseudo code of $C_i$. The word $w_\nu$ is \emph{initial} if $C_1$  is initial, and is \emph{accepting} if $C_p$ is accepting and each $C_j$ with $j<p$ is not accepting.  Moreover, if, additionally, each $C_i$ is a TM configuration and $w_{C_i}$ is a code of $C_i$, then we say that $w_\nu$ is a \emph{sequence-code}. Furthermore, $w_\nu$ is \emph{faithful to the evolution of $\mathcal{T}$} if  $C_i=succ_{dir_i}(C_{i-1})$ for each $2\leq i\leq p$.
 We encode the acceptance of $\mathcal{T}$ over $w_{in}$ as follows, where a  $\Gamma\cup \{\flat\}$-labeled tree is
\emph{minimal} if the children of each node have distinct labels. An \emph{accepting pseudo tree-code} 
is a \emph{finite} minimal $\Gamma\cup \{\flat\}$-labeled tree $T$ such that for each path $\pi$ of $T$, the word labeling $\pi$, written $w_\pi$, is an initial and accepting pseudo sequence-code (of some sequence of pseudo TM configurations) and:
\begin{itemize}
  \item each internal node labeled by $\exists$ (\emph{existential choice node}) has at most two children: one, if any, is labeled by $L$, and the other one, if any, is labeled by $R$;
  \item each internal node labeled by $\forall$ (\emph{universal choice node}) has exactly two children: one is labeled by $L$, and the other one is labeled by $R$.
\end{itemize}

If for each path $\pi$ of $T$, $w_\pi$ is a \emph{sequence-code}, then we say that $T$ is an \emph{accepting tree-code}. Moreover, if for each path $\pi$ of $T$, $w_\pi$ is faithful to the evolution of $\mathcal{T}$, then we say that $T$ is \emph{fair}.
\begin{remark} \label{lemma:acceptance}
$\mathcal{T}$ accepts $w_{in}$ iff there is an accepting fair tree-code.
\end{remark}

\noindent \textbf{Construction of the \OPD $\mathcal{S}$ of Theorem~\ref{theorem:mainLowerBound}.}
We construct the \OPD $\mathcal{S}$ in a modular way, i.e. $\mathcal{S}$ is obtained by putting together three \OPD $\mathcal{S}_0,\mathcal{S}_1$, and $\mathcal{S}_2$. Intuitively, the first \OPD $\mathcal{S}_0$ does not use invisible information and ensures that the set of its \emph{finite} strategy trees is precisely the set of accepting pseudo tree-codes. The second \OPD $\mathcal{S}_1$, which  does not use invisible information, is used to check, together with a fixed $\ACTL$ formula, that an accepting pseudo tree-code is in fact an accepting tree-code. The last \OPD $\mathcal{S}_2$, which is the unique `component' which uses invisible information, is used to check, together with a fixed $\ACTL$ formula, that an accepting tree-code is fair. First, we consider the \OPDS $\mathcal{S}_0$ and $\mathcal{S}_1$.  For a finite word $w$, we denote by $w^{R}$ the reverse of $w$.

\begin{lemma}\label{lemma:STEp1}
One can build in polynomial time $($in the sizes of $\mathcal{T}$ and $w_{in}$$)$ an \OPD $\mathcal{S}_0$  with \emph{no invisible information}, stack alphabet $\Gamma$,  set of propositions $\Gamma\cup \{\flat\}$, and special \emph{terminal}\footnote{a terminal control state is a control state from which there is no transition} control state $p_{fin}$ s.t.
$\mathcal{S}_0$ has only push transitions and  the set of its \emph{finite} strategy trees $ST$    is the set of accepting pseudo tree-codes. Moreover, for each node $x$ of $ST$, the stack content of $\state(x)$ is the \emph{reverse} of the word labeling the partial path from the root to $x$, and  $\state(x)$ has control state $p_{fin}$ and it is a system state if $x$ is a leaf.
 \end{lemma}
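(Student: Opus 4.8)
The plan is to realize $\mathcal{S}_0$ as a pushdown \emph{generator} that uses only push transitions and records, in its stack, the label of the path leading to the current node. If every configuration is labelled by the singleton consisting of its top stack symbol, i.e.\ $\mu(q,\gamma)=\{\gamma\}$ for $\gamma\in\Gamma$ and $\mu(q_0,\flat)=\{\flat\}$, then along any path of $T_{\mathcal{M}_{\mathcal{S}_0},s_0}$ the sequence of node labels is exactly the sequence of pushed symbols (prefixed by $\flat$), and the stack content at a node is the reverse of that sequence; this yields the stated relation between stack content and path label. Provided $p_{fin}$ is the only control state with no outgoing transition, the leaves of the finite strategy trees are exactly the terminal $p_{fin}$-configurations. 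The remaining work is twofold: (i) force the word spelled along each path to range exactly over the initial and accepting pseudo sequence-codes, and (ii) choose the partition into system and environment configurations so that the achievable prunings are exactly the accepting pseudo tree-codes.

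For (i), every condition defining an initial and accepting pseudo sequence-code is \emph{local}, hence enforceable by a finite control of size polynomial in $n$ and $|\mathcal{T}|$. I would let the control state track: the phase inside the current block (pushing the content $t\in\Lambda$, then the $n$ bits via a counter up to $n$, then the marker), together with the position of the first $0$-bit so that the marker $(\natural,l_\bot)$ is forced to be correct; enough information to force $\ID(bl_0)=0$, to force the last block (the one followed by the tag) to be all-$1$, and to forbid any earlier block from being all-$1$; the last two components of the previous block's content, forcing $\CON(bl_{i+1})$ to overlap $\CON(bl_i)$ correctly and the end conditions $C(-1)=C(k)=\bot$; whether a cell carrying a TM state has already occurred in the current configuration, so that exactly one occurs, plus the booleans ``state in $F$'' and ``state in $Q_\exists$'', which fix the tag value and decide, at the end of the configuration, whether to halt (accepting) or to continue (non-accepting); and a flag marking the first configuration $C_1$, whose first $n$ contents are hard-wired to the initial ones and then padded by blank blocks closed by a block whose content ends in $\bot$ (giving the variable number of blanks). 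The coupling ``last block iff bits all $1$'' is handled by noting that a content ends in $\bot$ exactly for the last block, so the content symbol already dictates whether the bits are forced to be all $1$ or to contain a $0$.

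For (ii), I make every \emph{free-choice} node an environment configuration and the universal branching a system configuration. A node whose successors are the admissible content symbols, or the two bits $0,1$, or (after the tag of a non-accepting \emph{existential} configuration) the two directions $L,R$, is declared an environment configuration; a node following the tag of a non-accepting \emph{universal} configuration is a system configuration with exactly the two successors pushing $L$ and $R$; and after the tag of an accepting configuration the control passes to the terminal state $p_{fin}$, which is therefore a leaf and, being terminal, may be designated a system state as required. Since $\mathcal{S}_0$ uses no invisible information, $\vis$ is injective on $T_{\mathcal{M}_{\mathcal{S}_0},s_0}$, so the consistency requirement on prunings is vacuous and the environment may keep any nonempty subset of the successors of an environment node independently. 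This matches the definition exactly: system nodes enforce ``exactly two children $L,R$'' at universal tags; environment nodes at existential tags give ``at most two children from $\{L,R\}$, at least one''; environment nodes elsewhere allow arbitrary branching (minimal, as the offered labels are pairwise distinct), which the definition leaves unconstrained at non-tag nodes; and leaves are exactly the first accepting configuration on each path.

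It then remains to prove the correspondence in both directions. Every finite strategy tree is an accepting pseudo tree-code, since each of its paths spells an initial and accepting pseudo sequence-code (the control offers only locally admissible symbols and forces a halt precisely at the first accepting configuration) and the branching constraints at tags hold by the designation above. Conversely, given an accepting pseudo tree-code $T$, the environment reproduces it by keeping, at each environment node, exactly the children present in $T$: each such child lies on a valid pseudo sequence-code and is therefore a locally admissible, hence offered, successor, while at the system (universal-tag) nodes $T$ already carries precisely the two successors $L,R$. I expect the main obstacle to be exactly this two-sided matching, namely checking that at each environment node the control offers \emph{all and only} the locally admissible continuations---so that prunings neither fall short of nor exceed the class of valid codes---and that the interaction between the halt-at-accepting rule, the hard-wired $C_1$, and the content/ID coupling neither discards a legitimate accepting pseudo tree-code nor admits an illegitimate one; the surrounding bookkeeping (the counters up to $n$ and the $\Lambda$-valued memory) is routine and keeps $\mathcal{S}_0$ of polynomial size.
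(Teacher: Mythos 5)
Your construction is correct and takes exactly the approach the paper intends (the paper defers this proof to its full version, but the surrounding text fixes the design): a push-only generator whose configurations are labeled by their top-of-stack symbol, so that stack content is the reverse of the path label, with environment nodes for all guessed symbols (contents, bits, existential directions), a system node offering precisely the two $L$/$R$ successors at universal tags, sibling transitions pushing pairwise distinct symbols so that minimality and the vacuity of the visibility-consistency condition hold, and local finite-state bookkeeping of polynomial size enforcing the pseudo-code constraints. One phrasing caveat: since $\mathcal{S}_0$ has only push transitions, ``after the tag of an accepting configuration the control passes to $p_{fin}$'' must mean that the tag-pushing transition itself moves the control to $p_{fin}$ (so the tag-labeled node is the $p_{fin}$-leaf); a separate move after the tag node would either be a forbidden non-push transition or push an extra symbol, breaking the property that each maximal path spells exactly an initial and accepting pseudo sequence-code.
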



\begin{lemma}\label{lemma:STEp2}
One can build in polynomial time $($in the sizes of $\mathcal{T}$ and $w_{in}$$)$ an \OPD $\mathcal{S}_1$   with \emph{no invisible information}, stack alphabet $\Gamma$, and set of propositions $\{main_1,check_1,good_1\}$ s.t.~$\mathcal{S}_1$ has only pop transitions and for each state $s=(p_0,\alpha^{R})$ such that  $p_0$ is the initial control state and $\alpha$ is a TM \emph{pseudo sequence-code}, the following holds:
 $s$ is labeled by $main_1$, there is a unique strategy tree $ST$ 
 from $s$, $ST$ is \emph{finite}, and $\alpha$ is a \emph{sequence code} iff $ST$ satisfies the fixed $\ACTL$ formula $\varphi_{check_1}=\Aop\G(check_1\rightarrow \Aop\F good_1)$.
\end{lemma}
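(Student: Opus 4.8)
The plan is to give $\mathcal{S}_1$ no invisible information and to declare all its configurations as \emph{system} configurations ($Env=\emptyset$), so that it uses only pop transitions and, started in $s=(p_0,\alpha^R)$, simply reads $\alpha$ from right to left by popping the stack. First I would isolate what actually has to be checked. Since $\alpha$ is a pseudo sequence-code, Lemma~\ref{lemma:STEp1} already guarantees that every $bl_i$ is a genuine TM block — so its stored symbol $(\natural,l_\bot)$ correctly records the least significant $0$-bit — and that the constraints $\ID(bl_0)=0$, $\ID(bl_{k-1})=2^n-1$ and $\ID(bl_h)\neq 2^n-1$ for $h<k-1$ hold inside each configuration. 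A one-line computation then shows that such a pseudo code is a code iff the block numbers of consecutive blocks of the same configuration increment, i.e. $\ID(bl_{h+1})=\ID(bl_h)+1$ for all $0\le h<k-1$: from $\ID(bl_0)=0$ correct increments force $\ID(bl_h)=h$, whence $\ID(bl_{k-1})=2^n-1$ yields $k=2^n$ and $\ID(bl_i)=i$. Hence it suffices to let $\mathcal{S}_1$ and $\varphi_{check_1}$ verify, for each within-configuration adjacent pair $(bl_h,bl_{h+1})$, that $bl_{h+1}$ encodes the binary successor of $bl_h$.

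For the construction I would run a deterministic ``spine'' that pops $\alpha^R$ symbol by symbol; all spine states carry $main_1$ and keep, in their (polynomially many) control states, the index of the bit currently on top (reset to $n$ after each $(\natural,\cdot)$ separator and decreased at each bit, since in $\alpha^R$ a block is read as $(\natural,l_\bot),bit_n,\dots,bit_1,t$). Whenever the spine pops a bit $bit_j$ of the current block $bl_{h+1}$, besides continuing it offers — by a second pop transition of the same symbol — a side successor entering a $check_1$-state that records $(j,v)$ with $v=bit_j^{h+1}$ and then, in its own subtree with its private copy of the stack, pops through the remainder $bit_{j-1},\dots,bit_1,t$ of $bl_{h+1}$ until it meets the symbol below. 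If that symbol is a $dir$-symbol, a $tag$, or $\flat$, then $bl_{h+1}$ is the first block of its configuration and there is nothing to check, so the thread moves to a $good_1$-labelled terminal state. Otherwise it is the leading symbol $(\natural,\ell)$ of the predecessor block $bl_h$, from which the thread learns $\ell=l_\bot^{h}$ and decides: it accepts (goes to $good_1$) iff $v=0$ when $j<\ell$, iff $v=1$ when $j=\ell$, and, when $j>\ell$, it pops down counting to $bit_j^{h}$ and accepts iff $v=bit_j^{h}$; in every failing case it moves instead to a $main_1$-labelled terminal ``bad'' state with no successor. This is exactly the ripple-carry description of incrementing, because the carry into position $j$ equals $1$ iff $j\le l_\bot^{h}$. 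Checks are spawned only on bit symbols, so the thread itself self-determines (from the symbol below) whether a real comparison or a vacuous one is required, and no check fires spuriously across configuration boundaries.

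Since $Env=\emptyset$, no environment pruning is possible and the imperfect-information consistency requirement is vacuous, so the unique strategy tree from $s$ is the full computation tree $T_{\mathcal{M}_{\mathcal{S}_1},s}$; and because $\mathcal{S}_1$ only pops a finite stack, every path has length at most $|\alpha|$ and this tree is finite. Setting $\mu(p_0,\gamma)=\{main_1\}$ for all $\gamma$ makes $s$ carry $main_1$. For the equivalence, each $check_1$-node is the root of a single finite path (check states are deterministic), so it satisfies $\Aop\F good_1$ iff that path reaches $good_1$, i.e. iff its bit comparison succeeds or is vacuous; spine states carry $main_1$, so $\Aop\G$ imposes nothing on them. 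Thus $ST\models \Aop\G(check_1\rightarrow\Aop\F good_1)$ iff every bit of every non-first block matches the successor relation with the block below, iff every within-configuration adjacent pair increments correctly, iff — by the reduction of the first paragraph — $\alpha$ is a sequence-code.

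The main obstacle is the bit-comparison gadget and its interface with the \emph{fixed} formula. Comparing the $j$-th bits of two blocks lying at stack distance $n+2$, using only pop moves and polynomially many control states, forces the thread to carry both $j$ and $v$ across the whole of $bl_{h+1}$ and to split into the three cases $j<\ell$, $j=\ell$, $j>\ell$ on the basis of the value $\ell=l_\bot^{h}$ that is revealed only \emph{after} $bl_{h+1}$ has been consumed; getting this self-contained reading right, including the vacuous ``first block'' case, is the delicate part. Equally important is the careful wiring to $\varphi_{check_1}$: a failed comparison must end on a finite $good_1$-free branch so that $\Aop\F good_1$ genuinely fails, exactly one $check_1$-thread must be spawned per (bit, non-first block), and all scanning states must avoid the labels $check_1$ and $good_1$ so that $\Aop\G$ constrains only the intended nodes.
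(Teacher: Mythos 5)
Your proposal is correct and follows essentially the approach the paper intends: the paper itself omits the proof of this lemma (deferring it to the full version), but your construction --- an all-system, $main_1$-labelled popping spine that spawns, by internal nondeterminism, one deterministic $check_1$-thread per bit which verifies the ripple-carry increment of consecutive within-configuration block numbers against the $(\natural,l_\bot)$ symbol of the block below, signalling success by $good_1$ so that $\varphi_{check_1}$ holds exactly on sequence-codes --- mirrors the paper's sibling construction of $\mathcal{S}_2$ in Lemma~\ref{lemma:STEp3}, and your reduction of ``pseudo code versus code'' to correct increments $\ID(bl_{h+1})=\ID(bl_h)+1$ is exactly the intended observation. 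The only slips are cosmetic: block well-formedness and the boundary conditions on $\ID$ follow from the hypothesis that $\alpha$ is a pseudo sequence-code (not from Lemma~\ref{lemma:STEp1}), a $tag$ can never occur directly below a block in the pop order (so that vacuous case is unreachable), and the terminal moves to $good_1$ or to the failing state must be folded into pop transitions (e.g., consuming the $dir$-symbol or the bottom symbol $\flat$) so as to respect the pop-only requirement.
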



\begin{lemma}\label{lemma:STEp3}
One can build in polynomial time $($in the sizes of $\mathcal{T}$ and $w_{in}$$)$ an  \OPD $\mathcal{S}_2$ with \emph{invisible information} and \emph{visible stack content},   stack alphabet $\Gamma$, and set of  propositions $AP=\{main_2,check_2,select_2,$ $good_2\}$,  s.t.
$\mathcal{S}_2$ has only pop transitions and for each state $s=(p_0,\alpha^{R})$, where $p_0$ is the initial control state and $\alpha$ is a TM \emph{sequence-code}, the following holds:
state $s$ is labeled by $main_2$,
each strategy tree  of $\mathcal{S}_2$ from $s$ is finite, and $\alpha$ is \emph{faithful to the evolution of $\mathcal{T}$} iff there is a strategy tree $ST$ from $s$ satisfying the fixed $\ACTL$ formula $\varphi_{check_2}=\Aop\G\bigl(check_2 \rightarrow [((\Aop\X\,check_2)\vee (\Aop\X \,select_2)) \wedge \Aop\F\,good_2] \bigr)$.
\end{lemma}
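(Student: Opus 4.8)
The plan is to have $\mathcal{S}_2$ pop the reversed sequence-code $\alpha^R$ along a main spine labeled $main_2$ and, at every block of every configuration code $w_{C_j}$ with $j\geq 2$, spawn a small verification gadget testing the single local constraint into which faithfulness decomposes: if the current block of $C_j$ has cell number $i$ and middle content symbol $C_j(i)$, then $C_j(i)=next_{dir_j}(t)$, where $t$ is the content triple stored in the block of $C_{j-1}$ carrying the \emph{same} cell number $i$. Since $\alpha$ is assumed to be a genuine sequence-code, each configuration has exactly $2^n$ correctly numbered blocks, so the block of $C_{j-1}$ to be matched lies exactly $2^n$ blocks (plus the marker $dir_j$) below the current block of $C_j$. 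The whole difficulty is to match these two blocks---whose $n$-bit numbers coincide but which lie an exponential distance apart---with an \OPD of polynomial size that may only pop.

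First I would make the spawning of gadgets a \emph{system} branching, so that the spine visits, and hence tests, every cell of every $C_j$ and no gadget can be pruned away. Inside a gadget I record in the control the symbol $C_j(i)$ and, on the way down, the marker $dir_j$, and then enter the $check_2$ phase, popping into the code of $C_{j-1}$. There the \emph{environment} is allowed to \emph{select} one block: the clause $(\Aop\X\,check_2)\vee(\Aop\X\,select_2)$ forces every $check_2$ node, after pruning, to have uniformly labeled children, so the environment commits cleanly to a single selected block. To force it to select the block numbered exactly $i$ without ever storing the $n$-bit number, I branch---again by \emph{system} nondeterminism, so all branches survive---over the $n$ bit positions $\ell\in\{1,\dots,n\}$, keeping both $\ell$ and the recorded bit $bit_\ell$ of the current $C_j$-block in \emph{invisible} control variables (the stack content stays visible); on branch $\ell$, once a block of $C_{j-1}$ is selected, I compare its $\ell$-th bit against the recorded one, and on one further branch I read the selected block's triple and check the equation $C_j(i)=next_{dir_j}(t)$. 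Each successful comparison leads to $good_2$, and the conjunct $\Aop\F\,good_2$ demands that every branch succeed.

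The key point is that, the bit position $\ell$ being invisible, the $n$ selection branches are pairwise indistinguishable up to the environment's choice, so by the consistency requirement on strategy trees the environment must select the \emph{same} $C_{j-1}$-block on all of them. This turns the single exponential-range equality ``selected number $=i$'' into the conjunction of $n$ polynomial-size bit tests, and is exactly where imperfect information is indispensable (with perfect information the environment could pick a different, locally matching block on each branch and fake a match). Both directions then follow. If $\alpha$ is faithful, the environment selects, in each gadget, the genuinely matching block numbered $i$; all bit tests and every content equation hold, so all paths reach $good_2$ and the resulting strategy tree satisfies $\varphi_{check_2}$. If $\alpha$ is not faithful, some cell $i_0$ of some $C_j$ violates the update rule; the spine forces that cell's gadget to be present, and whatever block the environment uniformly selects in $C_{j-1}$ either has number $i_0$, whence the content branch fails the equation, or has a different number, whence some bit branch fails---so $\Aop\F\,good_2$ fails at that $check_2$ node and no strategy tree satisfies $\varphi_{check_2}$. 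Finiteness of every strategy tree from $s$ is immediate since $\mathcal{S}_2$ only pops, and $s$ is made a $main_2$-state by defining $\mu(p_0,\TOP(\alpha^R))=\{main_2\}$.

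I expect the main obstacle to be the soundness argument for the ``not faithful'' direction, namely turning the informal ``invisibility forces a uniform selection'' into a rigorous statement about strategy trees: one must verify that the visible histories along the $n$ bit branches down to the selection point are literally equal, so that the strategy-tree consistency clause applies and the environment cannot escape by selecting different blocks; and one must fix the system/environment partition of configurations together with the visible/invisible partition of control variables so that this uniformity holds while the universal branchings (over cells and over bits) stay unprunable. The remaining bookkeeping---reading $dir_j$ and the boundary markers $\bot$, confining selection to $C_{j-1}$, and checking the polynomial bound on the size of $\mathcal{S}_2$---is routine once this core gadget is in place.
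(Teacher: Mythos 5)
Your proposal is correct and takes essentially the same approach as the paper: a $main_2$ spine pops $\alpha^{R}$ and spawns by system nondeterminism $n$ visibly indistinguishable $check_2$ copies whose bit index is kept in invisible control variables, so that the strategy-tree consistency condition, together with the clause $(\Aop\X\,check_2)\vee(\Aop\X\,select_2)$ and $\Aop\F\,good_2$, forces the environment to commit to one and the same block of $C_{j-1}$ on all branches, reducing the exponential cell-number match to $n$ bit tests plus the $next_{dir}$ content check. The only cosmetic difference is that the paper has each of the $n$ copies record its bit while popping the current block and re-checks the content equation $u_1=next_{dir}(\CON(bl_2))$ inside every $select_2$ copy, whereas you branch over bit positions afterwards and delegate the content equation to one extra (equally indistinguishable) branch; nothing essential changes.
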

 \begin{proof}
 We informally describe the construction of  $\mathcal{S}_2$, 
 which additionally satisfies the following: (1) the labeling function can be seen as a mapping $\mu:P\rightarrow AP$, where $P$ is the set of control states, and (2) for each control state $p$, $\vis(p)=\mu(p)$. Assume that initially $\mathcal{S}_2$ is in state $(p_0,\alpha^{R})$, where $p_0$ is the initial control state and $\alpha$ is a sequence-code. Note that $\alpha$ is faithful to the evolution of $\mathcal{T}$ iff for each subword\footnote{given a  word $w$, a finite word $w'$ is a \emph{subword} of $w$ if $w$ can be written in the form $w=w_1\cdot w'\cdot w_2$} of $\alpha^{R}$ of the form
 $(bl_1^{R}\cdot\beta^{R}_1)\cdot dir\cdot \beta^{R}_2$ such that $\beta_1\cdot bl_1$ is a prefix of a TM code, $bl_1$ is a TM block with $\CON(bl_1)=(u_{1,p},u_1,u_{1,s})$, and $\beta_2$ is a TM code, the following holds: $u_1=next_{dir}(u_{2,p},u_2,u_{2,s})$, where
$(u_{2,p},u_2,u_{2,s})=\CON(bl_2)$ and $bl_2$ is the unique TM block of $\beta_2$ such that  $\ID(bl_2)=\ID(bl_1)$.
Then, starting from the $main_2$-state $(p_0,\alpha^{R})$, the $main_2$-copy of $\mathcal{S}_2$ pops $\alpha^{R}$ (symbol by symbol) and terminates its computation (a $main_2$-state is labeled by $main_2$) with the additional ability to start by \emph{internal nondeterminism} (i.e., the choices are made by the system) $n$ auxiliary copies (each of them in a $check_2$-state) whenever the popped symbol is in $\{\natural\}\times \{\bot,1,\ldots,n\}$. Let $l_\bot^{1}$ be the currently popped symbol in
$\{\natural\}\times \{\bot,1,\ldots,n\}$. Hence, the current stack content is of the form $bl_1^{R}\cdot \alpha'$, where $bl_1$ is a TM block.
Assume that $\alpha'$ contains some symbol in $\{L,R\}$ (the other case being simpler), hence $\alpha'$ is of the form
$\beta^{R}_1\cdot dir\cdot \beta^{R}_2\cdot \alpha''$ such that $\beta_1\cdot bl_1$ is a prefix of a TM code, $bl_1$ is a TM block with $\CON(bl_1)=(u_{1,p},u_1,u_{1,s})$, and $\beta_2$ is a TM code. Then,
the $i$-th \emph{$check_2$} copy ($1\leq i\leq n$), which visits states labeled by $check_2$, deterministically pops the stack (symbol by symbol) until the symbol $dir$  and memorizes by its finite control the $i$-th bit $bit^{1}_i$ of $bl_1$ and the symbol $u_1$ in the content $\CON(bl_1)$ of $bl_1$.
When the symbol $dir\in \{L,R\}$ is popped, then the $i$-th $check_2$ copy pops $\beta_2^{R}$ 
 and terminates its computation with the additional ability to  start by \emph{external} nondeterminism
  (i.e., the choices are made by the environment) an auxiliary copy of $\mathcal{S}_2$ in a $select_2$-state (i.e., a state labeled by $select_2$)  whenever the first symbol of the reverse of a TM block $bl_2$ of $\beta_2$ is popped.
 The $select_2$-copy, which keeps track of $bit_i^{1}$, $u_1$, and $dir$, deterministically pops $bl_2^{R}$  and memorizes by its finite control
 the $i$-th bit  $bit_i^{2}$ of $bl_2$ and $\CON(bl_2)$. When $\CON(bl_2)=(u_{2,p},u_2,u_{2,s})$ is popped, then the $select_2$-copy terminates its computation, and moves to a $good_2$-state \emph{iff} $bit_i^{2}=bit_i^{1}$ and $u_1=next_{dir}(u_{2,p},u_2,u_{2,s})$.

 Let $ST$ be a strategy tree of $\mathcal{S}_2$ from  state $(p_0,\alpha^{R})$. For each $check_2$-node $x$ of $ST$, let $main(x)$ be the last $main_2$-node in the partial path from the root to $x$. 
  Let $x$ and $y$ be two distinct $check_2$-nodes of $ST$ which have the same distance from the root and such that $main(x)=main(y)$. First, we observe that the stack contents of $x$ and $y$ coincide, and $x$ and $y$ are associated with two distinct $check_2$-copies.  Since for all control states $p$,  $\vis(p)=\mu(p)$, it follows that for each $p\in \{check_2,select_2\}$, $x$ has a $p$-child iff $y$ has a $p$-child. Assume that $ST$ satisfies the fixed $\ACTL$ formula $\varphi_{check_2}$.
  Let $x$ be an arbitrary main node of $ST$ such that the stack content of $x$ is of the form $(bl_1^{R}\cdot\beta^{R}_1)\cdot dir\cdot \beta^{R}_2\cdot\alpha'$, where $bl_1$ is  a TM block, $\beta_1\cdot bl_1$ is the prefix of a TM code, $dir\in \{L,R\}$, and $\beta_2$ is a TM code.
 Let $\CON(bl_1)=(u_{1,p},u_1,u_{1,s})$. By construction, it follows that for each $1\leq i\leq n$, $x$ has a $check_2$-child $x_i$ such that  the subtree rooted at $x_i$ is a chain which leads to a TM $select_2$-block $bl_2^{i}$ of $\beta_2$ followed by a $good_2$-node such that the $i$-th bit of $bl_2^{i}$ coincides with the $i$-th bit of $bl_1$ and $u_1=next_{dir}(u_{2,p},u_2,u_{2,s})$, where $(u_{2,p},u_2,u_{2,s})=\CON(bl_2^{i})$. Moreover, by the observation above, it follows that all the $n$ $check_2$-copies associated with the $n$ $check_2$-children of $x$ select  the  same TM block $bl_2$  of $\beta_2$. Since
 the $i$-th bit of $bl_2$ coincides with the $i$-th bit of $bl_1$ for each $1\leq i\leq n$, $bl_2$ is precisely the TM block of $\beta_2$ have the same cell number as $bl_1$. It follows that $\alpha$ is faithful to the evolution of $\mathcal{T}$. Vice versa, if $\alpha$ is faithful to the evolution of $\mathcal{T}$, it easily follows that there is a strategy tree  from $(p_0,\alpha^{R})$ satisfying $\varphi_{check_2}$.
 \end{proof}

 Let $\mathcal{S}_0,\mathcal{S}_1$, and $\mathcal{S}_2$ be the \OPDS of Lemmata~\ref{lemma:STEp1}, \ref{lemma:STEp2}, and \ref{lemma:STEp3}, respectively. W.l.o.g.~we assume that the sets of visible and invisible control variables of these \OPDS are pairwise disjoint. Hence,
  their sets of control states  are pairwise disjoint as well. The \OPD $\mathcal{S}$ satisfying 
  Theorem~\ref{theorem:mainLowerBound} is obtained from $\mathcal{S}_0,\mathcal{S}_1$, and $\mathcal{S}_2$ as: (1) the set of control states is the union of the sets of control states of $\mathcal{S}_0,\mathcal{S}_1$, and $\mathcal{S}_2$, and the initial control state is the initial control state of $\mathcal{S}_0$, (2) the transition relation contains all the transitions of $\mathcal{S}_0,\mathcal{S}_1$, and $\mathcal{S}_2$ and, additionally,  two \emph{internal} transitions  from the special terminal control state $p_{fin}$ of $\mathcal{S}_0$ to the initial control states of $\mathcal{S}_1$ and $\mathcal{S}_2$, respectively, and (3) the  labeling function and the partitioning in environment and system states are obtained from those of $\mathcal{S}_0,\mathcal{S}_1$, and $\mathcal{S}_2$ in the obvious way. Let $\varphi_{check_1}$ and $\varphi_{check_2}$ be the fixed $\ACTL$ formulas of Lemmata~\ref{lemma:STEp2} and \ref{lemma:STEp3}, and let $\varphi_{finite}=\Aop\F(\Aop\X\, \neg\mathit{true})$ be the fixed $\ACTL$ formula asserting that a (finitely-branching) tree is finite.\footnote{note that a strategy tree of a \OPD is finitely-branching, i.e. the set of children of any node is finite.} Note that a  state of $\mathcal{S}$ is a state of $\mathcal{S}_0$ iff it is \emph{not} labeled by any proposition in $Prop_{fixed}=\{main_1,main_2,check_1,check_2,good_1,good_2,select_2\}$. \mbox{By  Lemmata~\ref{lemma:STEp1}, \ref{lemma:STEp2}, and \ref{lemma:STEp3}, we easily obtain that} \vspace{0.1cm}

 \noindent \textbf{Claim:} there is an accepting fair tree-code (i.e., $\mathcal{T}$ accepts $w_{in}$) iff there is a strategy tree of $\mathcal{S}$ satisfying the fixed $\ACTL$ formula $\varphi_{finite}\,\wedge\,\Aop\G\bigl([\bigwedge_{p\in Prop_{fixed}} \neg p]\, \longrightarrow\, [\bigwedge_{i=1}^{i=2}\Aop\X(main_i\rightarrow\varphi_{check_i})]\bigr)$.\vspace{0.1cm}

\noindent  By the claim above, Theorem~\ref{theorem:mainLowerBound} follows, which concludes.

\section{Pushdown module checking  for $\OPD$ with visible stack content depth}\label{sec:Problem2}

\dettagli{
Pushdown module checking $($with imperfect information$)$ for $\CTL$ specifications is known to be undecidable~\cite{AMV07}. However, the undecidability proof in~\cite{AMV07} uses \OPD for which each stack symbol consists entirely of invisible variables (hence, the environment does not see the push at all). The decidability status for the considered problem restricted to the class of \OPD where the stack content depth is visible is left open in~\cite{AMV07}. In this
section, we show that pushdown module checking for $\CTL$ specifications and \OPD with \emph{visible stack content depth} remains undecidable, even if the $\CTL$ formula is assumed to be in the fragment $\CTL(\E\F,\E\X,\Aop\G,\Aop\X)$ and the \OPD is assumed to be stable and having only environment configurations.
On the other hand, we show that pushdown module checking for the class of stable \OPD with visible stack content depth and having only environment configurations become decidable for  the existential fragment of $\CTL$.}

\subsection{Undecidability results}\label{sec:UndecidabilityProblem2}

In this subsection, we establish the following result.

\begin{theorem}\label{theorem:undecidabilityFirstStep} \PMC with imperfect information against $\CTL$ restricted to  \OPDS with \emph{visible stack content depth} is undecidable, even if the $\CTL$ formula is assumed to be in the fragment $\CTL(\E\F,\E\X,\Aop\G,\Aop\X)$ and the \OPD is assumed to be stable and having only environment configurations.
\end{theorem}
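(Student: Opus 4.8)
The plan is to reduce from Post's Correspondence Problem (PCP). Given an instance consisting of two lists of words $u_1,\ldots,u_k$ and $v_1,\ldots,v_k$ over some alphabet, a solution is a nonempty sequence of indices $i_1,\ldots,i_m$ such that $u_{i_1}\cdots u_{i_m}=v_{i_1}\cdots v_{i_m}$. The PCP has a solution iff it does \emph{not} hold that all strategy trees of the constructed \OPD satisfy the target formula; equivalently, we will arrange that $\mathcal{M}_{\mathcal{S}}\not\models_r\varphi$ iff the PCP instance is solvable. I would exploit the fact that $\mathcal{M}\not\models_r\varphi$ means there \emph{exists} a strategy tree satisfying $\neg\varphi$, so the existentially-chosen strategy tree will encode a candidate PCP solution, and the fragment $\CTL(\E\F,\E\X,\Aop\G,\Aop\X)$ formula will be used to verify that the guessed solution is genuine.

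The central mechanism is the same used throughout the paper: the environment has imperfect information and sees only the \emph{depth} of each pushed stack symbol, not its identity, because many distinct stack symbols share the same visible part while the invisible part records the actual content. The idea is to let the \OPD push a sequence of symbols encoding a word built from the $u$-blocks and, along a sibling branch, a sequence encoding the corresponding word built from the $v$-blocks; since the stack content depth is visible but the content itself is invisible, the environment cannot distinguish the two branches and, by the consistency requirement on strategy trees (if $\vis(x_1)=\vis(x_2)$ then $x_1\in ST$ iff $x_2\in ST$), is forced to keep or prune them together. This coupling is precisely what lets a single guessed index sequence be shared between the $u$-side and the $v$-side, so that the two concatenations use the \emph{same} sequence of indices. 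First I would design the control states and transitions so that each branch nondeterministically selects indices $i_1,i_2,\ldots$ and pushes the letters of $u_{i_j}$ (resp.\ $v_{i_j}$) one at a time onto the stack, tagging each letter with its visible depth; equality of the two concatenated words is then reducible to a position-by-position comparison that can be read off by popping and comparing against the visible-depth information. The \OPD must be made \emph{stable} and use only environment configurations; stability (equivalently, by Remark~\ref{remark:stable}, stability of $\mathcal{M}_{\mathcal{S}}$) forces the transition relation to respect the visibility equivalence, so I must check that from visibly-indistinguishable configurations the available visible moves coincide, which constrains how the invisible content is manipulated.

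Next I would craft the formula $\varphi$ in $\CTL(\E\F,\E\X,\Aop\G,\Aop\X)$ so that the \emph{guessed} strategy tree satisfies $\neg\varphi$ exactly when the two encoded words are equal and index-synchronized. Because we are negating, $\neg\varphi$ will be an existential-flavored assertion ($\E\F,\E\X$ and their Boolean combinations) stating ``there is a branch witnessing a mismatch'' is \emph{absent}, i.e.\ the formula $\varphi$ itself universally ($\Aop\G,\Aop\X$) asserts that some mismatch or length-discrepancy always eventually occurs; a strategy tree violating $\varphi$ is then one in which no mismatch occurs along the coupled branches, which is exactly a valid PCP solution. I would use $\Aop\G$ to propagate an invariant down the tree, $\Aop\X$ to force both the $u$- and $v$-successors to be examined, and $\E\F/\E\X$ in $\neg\varphi$ to witness the reachability of an accepting ``solution found'' configuration. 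The correspondence ``PCP solvable iff $\mathcal{M}_{\mathcal{S}}\not\models_r\varphi$'' then yields undecidability, since $\models_r$ of such a formula would decide PCP.

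The main obstacle I expect is the simultaneous satisfaction of three constraints: stability, the ``environment-only configurations'' restriction, and visible stack content depth. Visible depth is strictly weaker than a fully visible stack, so the environment genuinely cannot read the pushed symbols and the pruning-consistency condition becomes the active ingredient forcing the $u$-branch and the $v$-branch to carry the same index choices; but at the same time stability demands that visibly-equal configurations offer visibly-equal successors, which threatens to leak information and destroy the coupling unless the encoding is designed so that \emph{every} genuinely different choice is made invisibly while the visible projection stays identical. Reconciling these — keeping the branching invisible enough to enforce shared guesses, yet structured enough that the $\CTL(\E\F,\E\X,\Aop\G,\Aop\X)$ formula can detect any mismatch without relying on forbidden existential-path reasoning inside $\varphi$ — is the delicate part, and I would spend most of the effort verifying that the constructed \OPD is indeed stable and that the imperfect-information pruning forces exactly the intended synchronization.
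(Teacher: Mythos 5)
Your high-level shape (reduction from PCP, exploiting $\mathcal{M}\not\models_r\varphi$ as existence of a strategy tree satisfying $\neg\varphi$, with imperfect information and the pruning-consistency condition doing the synchronization) matches the paper, but your central gadget has a genuine flaw. You propose to make the pushed symbols' identities entirely invisible (only depth visible) so that a $u$-branch and a sibling $v$-branch are indistinguishable and the environment's index guesses are coupled across them. This is self-defeating under the ``only environment configurations'' restriction: the strategy tree must keep or prune visibly-equal nodes \emph{together}, and at a non-terminal environment node at least one child survives; hence if the index-selecting transitions are visibly identical, \emph{all} index choices survive in every strategy tree — the environment cannot make an invisible choice at all, so no candidate solution gets selected. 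Conversely, if you make the index choices visible so they can be guessed, your two-branch coupling collapses anyway: after pushing $u_{i_1}$ on one branch and $v_{i_1}$ on the other with $|u_{i_1}|\neq|v_{i_1}|$, the visible histories (which include the visible stack depth at every step) differ, so the consistency condition no longer relates the two branches and the subsequent index guesses can diverge. You also leave the genuinely hard part — verifying that the factorization of the guessed word aligns block-by-block with $u^t_{i_1},\ldots,u^t_{i_k}$ using only $\E\F,\E\X,\Aop\G,\Aop\X$ — as an unspecified ``position-by-position comparison.''

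The paper's construction resolves exactly this tension differently, and the difference is essential. The candidate solution (an index sequence $i_1,\ldots,i_k$ \emph{and} a single word $w$) is guessed \emph{visibly} by the environment and pushed once onto the stack; both the $t=1$ and $t=2$ checks consume the \emph{same} $w$, so equality of the two concatenations is automatic rather than checked across branches. The only invisible datum is a one-bit mark $\natural$ on at most one index occurrence: since $i$ and $(i,\natural)$ are visibly equal, every strategy tree is forced to contain \emph{all} marked variants of each guessed word simultaneously (a universal, not existential, use of invisibility), and the well-formedness this induces — witness subtrees at visibly-equal $end_2$-nodes must share the same support, i.e.\ the same factorization of $w$ — is what synchronizes the verification across all mark positions. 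Alignment of the $j$-th factor with $u^t_{i_j}$ is then checked by an explicit counting gadget (secondary $\diamondsuit$-chains whose lengths are compared to suffix lengths, with leaf types $prev$, $succ$, $match$, $no_{match}$), which is what makes the property expressible in the fragment and keeps the \OPD stable. Your proposal correctly identifies imperfect information plus pruning-consistency as the engine, but without the mark-based all-variants trick and the shared-support/counting machinery the reduction does not go through.
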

 Theorem~\ref{theorem:undecidabilityFirstStep} is proved  by a reduction from the Post's Correspondence Problem (PCP, for short) \cite{HU79}.
 An instance $\mathcal{I}$ of PCP is a tuple $\mathcal{I}=((u_1^{1},\ldots,u_n^{1}),(u_1^{2},\ldots,u_n^{2}))$, where $n\geq 1$ and
 for each $1\leq i\leq n$, $u^{1}_i$ and $u^{2}_i$ are non-empty finite words over an alphabet $A$. Let $[n]=\{1,\ldots,n\}$. A \emph{solution of}
 $\mathcal{I}$ is a non-empty sequence $i_1,i_2,\ldots,i_k$ of integers in $[n]$ such that
 $u^{1}_{i_1}\cdot u^{1}_{i_2}\cdot\ldots\cdot u^{1}_{i_k}=u^{2}_{i_1}\cdot u^{2}_{i_2}\cdot\ldots\cdot u^{2}_{i_k}$. PCP consists in checking for a given instance
 $\mathcal{I}$, whether $\mathcal{I}$ admits a solution. This problem is known to be undecidable \cite{HU79}. In the rest of this section, we fix a PCP instance
   $\mathcal{I}=((u_1^{1},\ldots,u_n^{1}),(u_1^{2},\ldots,u_n^{2}))$ and prove the following result, hence Theorem~\ref{theorem:undecidabilityFirstStep} follows.

\begin{theorem}\label{theorem:mainUndecidability} One can build  a \emph{stable} \OPD $\mathcal{S}$ with \emph{visible stack content depth} and having only environment configurations, and a $\CTL(\E\F,\E\X,\Aop\G,\Aop\X)$ formula $\varphi$ such that $\mathcal{I}$ has \emph{no} solution \emph{iff}  $\mathcal{M}_{\System}\models_r\varphi$.
\end{theorem}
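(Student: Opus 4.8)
The plan is to reduce PCP to the \emph{complement} of reactive satisfaction: I will construct $\mathcal{S}$ and a formula $\varphi$ so that $\mathcal{I}$ admits a solution \emph{iff} $\mathcal{M}_\System\not\models_r\varphi$, i.e. iff some strategy tree satisfies $\neg\varphi$ (recall that $\not\models_r\varphi$ means precisely that some strategy tree satisfies $\neg\varphi$). Since the fragment $\CTL(\E\F,\E\X,\Aop\G,\Aop\X)$ is closed under negation, its operators coming in dual pairs, I may freely design $\neg\varphi$ as a combination of reachability assertions ($\E\F$, $\E\X$) and universal invariants ($\Aop\G$, $\Aop\X$): concretely $\neg\varphi$ will assert that along some path an \emph{accept} marker is reached ($\E\F\,accept$) while, everywhere in the tree, a local consistency invariant of the shape $\Aop\G(\text{antecedent}\to\Aop\X\,\ldots)$ holds. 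The role of the universal conjunct is to force any satisfying strategy tree to genuinely \emph{certify} a matching rather than merely reach \emph{accept}.

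The \OPD $\mathcal{S}$ operates in two phases along each computation. In a \emph{push} phase the environment guesses an index sequence $i_1,\ldots,i_k\in[n]$ and, for each chosen $i_j$, pushes the letters of $u^1_{i_j}$ one per stack cell; the encoded letters live in the \emph{invisible} part of stack symbols that all share a common visible tag, so the letters are mutually indistinguishable while each push or pop still changes the visible stack depth (this is exactly the ``visible stack content depth'' regime, where $\vis(\gamma)\neq\varepsilon$ for every $\gamma$). In a \emph{pop} phase, triggered by an internal transition, the environment re-guesses an index sequence generating the second-component word and pops the stack, matching the regenerated letters against the stored ones block by block; reaching $\flat$ with empty residual in an accepting control state produces the \emph{accept} marker. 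A plain pushdown could carry out such an equality test only up to reversal and with no coupling between the two guessed index sequences; supplying that coupling is the job of imperfect information. Every pair $(q,\gamma)$ is declared an environment configuration, and the visible parts of control states are chosen so that visibly-equivalent non-terminal pairs offer visibly-matching successors, which by Remark~\ref{remark:stable} makes $\mathcal{S}$ (equivalently $\mathcal{M}_\System$) \emph{stable}.

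The heart of the argument is the use of imperfect information to force the two sides to agree. Because the strategy-tree consistency requirement prunes indistinguishable nodes uniformly, and because the stored letters are invisible (only the depth is seen), the environment cannot base its pop-phase pruning on the hidden stack content. Consequently, in any strategy in which the regenerated second word fails to match the stored first word position by position, some indistinguishable branch is forced to survive on which the mismatch becomes visible to the structure, and the universal invariant of $\neg\varphi$ is violated there; hence no such tree satisfies the $\Aop\G$ conjunct. Conversely, a genuine PCP solution $i_1,\ldots,i_k$ yields a strategy tree in which both words coincide, the invariant holds, and \emph{accept} is reached, so $\neg\varphi$ is satisfied. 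Soundness and completeness of the reduction then give ``$\mathcal{I}$ has a solution iff $\mathcal{M}_\System\not\models_r\varphi$'', equivalently ``$\mathcal{I}$ has no solution iff $\mathcal{M}_\System\models_r\varphi$'', as required.

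I expect the main obstacle to be the design of this forcing gadget: arranging the invisibility of the stack content, together with the uniform-pruning constraint, so that it \emph{precisely} forces the regenerated index sequence and word to coincide with the pushed ones, while respecting that (i) the only visible stack information is its depth, (ii) stability must be maintained, and (iii) every check must be expressed without an Until operator, using only $\Aop\G$, $\Aop\X$ and reachability. Establishing that the universal invariant flags exactly the inconsistent branches — neither rejecting a legitimate solution nor admitting a spurious one — is the delicate part of the proof.
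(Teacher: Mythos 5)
Your high-level shape agrees with the paper's (reduce solvability of $\mathcal{I}$ to the existence of a strategy tree satisfying $\neg\varphi$, exploiting closure of $\CTL(\E\F,\E\X,\Aop\G,\Aop\X)$ under negation, with the environment generating guesses and imperfect information forcing uniform pruning), but the heart of the reduction --- the gadget that couples the two index sequences --- is missing, and the sketch you give in its place fails as stated. In your construction the environment pushes $u^1_{i_1}\cdots u^1_{i_k}$ for a guessed sequence $i_1,\ldots,i_k$, then in the pop phase \emph{re-guesses} a sequence $j_1,\ldots,j_m$ and matches $u^2_{j_1}\cdots u^2_{j_m}$ letterwise against the stack. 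Nothing you describe forces $j_l=i_l$: a strategy can prune so that on some surviving branch the letterwise test succeeds with $u^2_{j_1}\cdots u^2_{j_m}=u^1_{i_1}\cdots u^1_{i_k}$ \emph{as words} but with unrelated index sequences; branches with letter mismatches simply block (the matching transitions are unavailable) rather than ``violating the universal invariant,'' and $\E\F\,accept$ is then witnessed by a spurious matching. This is fatal, because with uncoupled indices the question ``is $\{u^1_1,\ldots,u^1_n\}^{+}\cap\{u^2_1,\ldots,u^2_n\}^{+}$ nonempty?'' is decidable (both languages are regular), so your reduction, as described, reduces PCP to a decidable problem. You acknowledge the difficulty (``the design of this forcing gadget \ldots is the delicate part''), but the gadget is not a technical detail to be deferred: it is the entire content of the theorem.

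The paper's mechanism is quite different and worth contrasting with your sketch. The index sequence and the word $w$ are guessed \emph{once} and pushed, and --- unlike in your proposal, where the letters are invisible --- letters and indices are fully visible ($I_\Gamma=A\cup[n]$); the only invisible datum is a single marker $\natural$ ($H_\Gamma=\{\natural\}$) placed on at most one index occurrence. Because differently-marked copies of the same $(i_1,\ldots,i_k,w)$-word are visibly indistinguishable, uniform pruning forces the strategy tree to contain \emph{all} markings of each guessed word simultaneously, and the same visibility argument forces the well-formedness requirement: the $t$-witness subtrees ($t=1,2$) hanging off differently-marked copies must have the same support, so one common factorization $w=w_1\cdots w_l$ is checked against every marked position $i_j$. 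Positional alignment between the marked index and the matching block is then done by pure counting, via secondary $\diamondsuit$-chains whose leaf labels $prev$, $succ$, $match$, $no_{match}$ and a type-ordering condition (Proposition~\ref{remark:PCP}) are all expressible without Until. A correct proof along your lines would have to reconstruct an equivalent of this marker-plus-witness machinery (Propositions~\ref{remark:PCP} and~\ref{prop:PCP} and Lemma~\ref{lemma:MainLemmaUndecidability}); invisible letters alone, with a re-guessed second sequence, do not supply the coupling.
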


In order to prove Theorem~\ref{theorem:mainUndecidability}, first we describe a suitable encoding of the set of solutions of $\mathcal{I}$.
Some ideas in the proposed encoding are taken from \cite{ACM06}, where  emptiness of alternating automata on nested trees is shown to be undecidable. \vspace{0.15cm}

\noindent \textbf{Preliminary step: encoding of the set of solutions of $\mathcal{I}$}. We use the following set $AP$ of atomic propositions: $AP = A\cup [n]\cup ([n]\times \{\natural\})\cup \{\flat,end_1,end_2,prev,succ,no_{match},match,\top_1,\top_2,\bot_1,\bot_2, \diamondsuit\}$.\vspace{0.1cm}

\noindent We denote by $MAX$  the maximum of the sizes of the words in $\mathcal{I}$ and by  $A^{MAX}$  the set of words $w\in A^{+}$ such that $|w|\leq MAX$.  Let $i_1,\ldots,i_k\in [n]^{+}$ (i.e., a non-empty sequence of integers in $[n]$) and $w\in A^{+}$ (i.e., a non-empty finite  word over $A$). A \emph{marked $(i_1,\ldots,i_k,w)$-word} is a finite word $v$  over $AP$ obtained from the word $\flat\cdot i_1\cdot \ldots\cdot i_k\cdot end_1\cdot w^{R}\cdot end_2$ by replacing at most  one integer occurrence $i_j$, where $1\leq j\leq k$, with $(i_j,\natural)$. The marked $(i_1,\ldots,i_k,w)$-word $v$ is \emph{good} if it contains exactly one marked integer occurrence.
A (good) \emph{marked word} is a (good) marked $(i_1,\ldots,i_k,w)$-word for some $i_1,\ldots,i_k\in [n]^{+}$ and  $w\in A^{+}$.  A \emph{marked tree} $T_{marked}$ is a \emph{minimal}   $AP$-labeled tree satisfying the following:
\begin{itemize}
  \item each finite path of $T_{marked}$ is labeled by a marked word;
  \item for all $i_1,\ldots,i_k\in [n]^{+}$ and $w\in A^{+}$, if there is a finite  path of $T_{marked}$ labeled by a marked $(i_1,\ldots,i_k,$ $w)$-word, then for each marked $(i_1,\ldots,i_k,w)$-word $v$, there is a path of $T_{marked}$ labeled by $v$.
  \item each infinite path of $T_{marked}$ is labeled by a word in $\{\flat\}\cdot [n]^{\omega}\cup \{\flat\}\cdot [n]^{*}\cdot [n]\times \{\natural\} \cdot [n]^{\omega}\cup \{\flat\}\cdot [n]^{*}\cdot [n]\times \{\natural\} \cdot [n]^{*}\cdot \{end_1\}\cdot A^{\omega}$.\footnote{this last condition is irrelevant in the encoding of the set of solutions of $\mathcal{I}$. It just reflects, as we will see, the behavior of the \OPD of Theorem~\ref{theorem:mainUndecidability}}.
\end{itemize}
Note that $i_1,\ldots,i_k$ is a solution of $\mathcal{I}$ iff there is a word $w\in A^{+}$ which can be factored into
$u^{1}_{i_1}\cdot u^{1}_{i_2}\cdot \ldots\cdot u^{1}_{i_k}$ and similarly into $u^{2}_{i_1}\cdot u^{2}_{i_2}\cdot \ldots\cdot u^{2}_{i_k}$. In order to express this condition, we define suitable extensions of the marked trees. First, we need additional definitions.

For each $t=1,2$, a $t$-witness for  $w$ is a \emph{finite minimal} $AP$-labeled tree $T^{t}_w$ satisfying the following:  $T^{t}_w$ consists of  a \emph{main path}  labeled by a word of the form  $\bot_t\cdot w_1\cdot \top_t\cdot \ldots\cdot \top_t\cdot  w_l\cdot \top_t$ such that:
 \begin{itemize}
   \item $w_1,\ldots,w_l\in A^{MAX}$ and $w_1\cdot \ldots\cdot  w_l=w$;
   \item each $\top_t$-node  has an additional child $x$, which does not belong to the main path, such that the subtree rooted at $x$ is a finite chain (called \emph{secondary chain}), whose nodes are labeled by $\diamondsuit$.
 \end{itemize}

\noindent  Let $x_i$ be the $i^{th}$  $\top_t$-node along the main path, where $1\leq i\leq l$: we denote by $\length(x_i)$  the length of the associated secondary chain,  by
$\word(x_i)$ the word $w_i$, and by $\suffix(x_i)$  the (possibly empty) word $w_{i+1},\ldots,w_l$. An \emph{extension} of a $t$-witness $T^{t}_w$ for  $w$ is a \emph{finite minimal} $AP$-labeled tree $ET^{t}_w$ obtained from $T^{t}_w$ by extending each secondary chain of $T^{t}_w$ with an additional (leaf) node labeled by a symbol in  $\{prev,succ,no_{match},match\}$. We say that $T^{t}_w$ is the \emph{support} of $ET^{t}_w$. For  $p\in \{prev,succ,no_{match},match\}$, we say that a $\top_t$-node of $ET^{t}_w$ is \emph{of type $p$} if the secondary chain associated with $x$ lead to a $p$-node.  Given a good  marked
$(i_1,\ldots,i_k,w)$-word  $v=\flat\cdot i_1\cdot \ldots\cdot i_{j-1}\cdot (i_j,\natural)\cdot \ldots\cdot i_k\cdot  end_1\cdot w^{R}\cdot end_2$, we say that $ET^{t}_w$ is compatible with $v$ iff for each
 $\top_t$-node $x$ along the main path of $ET^{t}_w$, the following holds:
\begin{itemize}
  \item  $\length(x)\in \{|\suffix(x)|+1,\ldots,|\suffix(x)|+k\}$. Moreover, if $\length(x)> |\suffix(x)|+k-j+1$ (resp., $\length(x)< |\suffix(x)|+k-j+1$), then $x$ is of type `$prev$' (resp., `$succ$');
  \item if  $\length(x)=|\suffix(x)|+k-j+1$ and $\word(x)= u_{i_j}^{t}$ (resp., $\word(x)\neq  u_{i_j}^{t}$), then
         $x$ is of type `$match$' (resp., `$no_{match}$').
\end{itemize}

A \emph{marked tree with witnesses} $WT_{marked}$ is a \emph{minimal} $AP$-labeled tree such that there is a marked tree $T_{marked}$ so that
 $WT_{marked}$ is obtained from $T_{marked}$ as follows:
 \begin{itemize}
   \item for each leaf $x$ of $T_{marked}$ (note that $x$ is  an $end_2$-node), let $v$ be the marked word labeling the partial path from the root to $x$. Then, if $v$ is good, we add two children $x_1$ and $x_2$ to $x$ such that for each $t=1,2$, the subtree rooted at $x_t$ is an extension of a $t$-witness compatible with 
       $v$;
   \item \emph{well-formedness requirement}: let $w\in A^{+}$ and $i_1,\ldots,i_k\in [n]^{+}$, and $x$ and $y$ be two $end_2$-nodes of
   $WT_{marked}$ such that the associated marked words 
   are good $(i_1,\ldots,i_k,w)$-marked words. Then, we require that for each $t=1,2$, the two subtrees rooted at the $\bot_t$-child of $x$ and $y$, respectively, (which are extensions of $t$-witnesses) have the same support.
 \end{itemize}

\begin{proposition}\label{remark:PCP}$\mathcal{I}$ admits a solution \emph{iff} there is a \emph{marked tree with witnesses} $WT_{marked}$ having some $end_2$-node and such that 
for each  $\bot_t$-node $x$ ($t=1,2$), the subtree $ET_{w}^{x}$ rooted at  $x$ satisfies the following:
\begin{itemize}
  \item $ET_{w}^{x}$ has no `$no_{match}$'-nodes and there is exactly one node of $ET_{w}^{x}$ which is labeled by `$match$';
  \item \emph{no} $\top_t$-node   of type `${match}$' or `${succ}$'  is strictly followed by a
  $\top_t$-node  of type `${match}$' or `${prev}$'.
\end{itemize}
\end{proposition}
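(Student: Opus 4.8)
I want to prove that $\mathcal{I}$ admits a solution iff there exists a marked tree with witnesses $WT_{marked}$ (with some $end_2$-node) whose $\bot_t$-subtrees satisfy the two bulleted conditions. The proof is a careful unwinding of the definitions, and the key is to show that the two conditions together force the secondary-chain lengths and the node types to encode a genuine PCP solution.

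First I would analyze what the compatibility condition and the two bulleted conditions jointly assert. Recall that for a good marked $(i_1,\ldots,i_k,w)$-word $v$, the mark $(i_j,\natural)$ singles out the index $i_j$. For a $\top_t$-node $x$, compatibility ties $\length(x)$ to the quantity $|\suffix(x)|+k-j+1$: the node is of type $prev$/$succ$ according to whether $\length(x)$ overshoots or undershoots this threshold, and of type $match$ (resp. $no_{match}$) exactly when $\length(x)$ equals the threshold and $\word(x)=u_{i_j}^{t}$ (resp. $\word(x)\neq u_{i_j}^{t}$). The condition ``$ET_w^x$ has no $no_{match}$-nodes and exactly one $match$-node'' thus forces: there is a unique $\top_t$-node $x^{*}$ hitting the threshold, with $\word(x^{*})=u_{i_j}^{t}$, and every other node has length strictly above or below. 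The monotonicity condition (``no $match$/$succ$ node is strictly followed by a $match$/$prev$ node'') then forces the types to be arranged as a contiguous block of $prev$-nodes, then the single $match$-node, then a block of $succ$-nodes along the main path. Since $\length(x)\in\{|\suffix(x)|+1,\ldots,|\suffix(x)|+k\}$ and $|\suffix(x_i)|=l-i$ decreases by one as we move along the main path, I would check that this ordering of types pins down, for the $i$-th main-path node, exactly which index $i_m$ of the sequence $i_1,\ldots,i_k$ its block $w_i$ must equal. Concretely, the $i$-th block $w_i$ is forced to be $u_{i_m}^{t}$ for the appropriate $m$ determined by $l$, $i$, and $j$; the well-formedness requirement guarantees the same support (same factorization $w=w_1\cdots w_l$ and same block count $l$) is used for $t=1$ and $t=2$.

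For the forward direction, given a solution $i_1,\ldots,i_k$ I would set $w=u_{i_1}^{1}\cdots u_{i_k}^{1}=u_{i_1}^{2}\cdots u_{i_k}^{2}$, take $l=k$, and define, for each $t$, the $t$-witness whose main-path blocks are $w_i=u_{i_i}^{t}$ (so each block is a single $u^{t}$-word, hence in $A^{MAX}$). For each choice of mark position $j$, I assign secondary-chain lengths realizing $\length(x_i)=|\suffix(x_i)|+k-i+1$ so that the unique match occurs exactly at the node $i=j$ (where $\word(x_j)=u_{i_j}^{t}$ holds by construction), giving no $no_{match}$-nodes and the monotone $prev$-$match$-$succ$ type pattern. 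Assembling these extensions over all good marked words built from $i_1,\ldots,i_k,w$ yields the desired $WT_{marked}$. For the converse, I take an arbitrary tree satisfying the conditions, pick any $end_2$-node with marked word encoding $i_1,\ldots,i_k$ and $w$, and use the type-pattern analysis above to read off, for each $t$, the forced factorization $w=u_{i_{\sigma(1)}}^{t}\cdots u_{i_{\sigma(l)}}^{t}$; I would verify that varying $j$ over all good marks forces $\word(x^{*})=u_{i_j}^{t}$ for every $j$, which collectively pins the block at main-path position determined by $j$ to the correct $u^{t}$-word, yielding $w=u_{i_1}^{t}\cdots u_{i_k}^{t}$ for both $t=1,2$, i.e. a solution.

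**The main obstacle.** The delicate point is the bookkeeping that converts the geometric constraints on secondary-chain lengths (inequalities relating $\length(x)$, $|\suffix(x)|$, $k$, and $j$) into the statement that, as $j$ ranges over all admissible mark positions, the match-node slides through the main path and forces \emph{each} block $w_i$ to equal the corresponding $u_{i_m}^{t}$. I expect the care to lie in: (i) confirming that the single-$match$/no-$no_{match}$ requirement together with the monotone type-ordering uniquely determines the threshold-crossing index for each fixed $j$; and (ii) checking that ranging over all good marks $j$ (guaranteed to be present by the minimality and second bullet of the marked-tree definition, which supplies every marked $(i_1,\ldots,i_k,w)$-word) sweeps out the full correspondence $w_i=u_{i_i}^{t}$, so that both $t=1$ and $t=2$ share the \emph{same} block boundaries via the well-formedness requirement, forcing equality of the two factorizations and hence a genuine PCP solution.
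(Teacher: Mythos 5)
Your overall strategy is the intended one (note that the paper itself defers this proof to its full version, so there is no in-paper proof to compare against; your reconstruction of the mechanics is faithful to the definitions): read compatibility as saying that a $\top_t$-node $x$ with $\length(x)=|\suffix(x)|+m$ encodes popping $m$ integers, with the threshold $m^{*}=k-j+1$ corresponding to the marked occurrence $(i_j,\natural)$; the first bulleted condition then forces exactly one threshold node, with $\word$ equal to $u^{t}_{i_j}$ there, and the second forces the type pattern $prev^{*}\cdot match\cdot succ^{*}$ along the main path. Your forward construction (blocks $w_i=u^{t}_{i_i}$ with $l=k$ and $\length(x_i)=|\suffix(x_i)|+k-i+1$, independent of the mark position $j$) is correct and satisfies both conditions for every $j$.

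The genuine gap is your reading of the well-formedness requirement, on which your converse direction hinges. You assert twice that it forces the supports for $t=1$ and $t=2$ to coincide (``the same support \dots\ is used for $t=1$ and $t=2$'', ``both $t=1$ and $t=2$ share the same block boundaries''). The definition says something different: for each \emph{fixed} $t$, the $\bot_t$-subtrees at any two $end_2$-nodes carrying good marked $(i_1,\ldots,i_k,w)$-words have the same support. Cross-$t$ equality is false in general --- it would contradict your own forward construction whenever $|u^{1}_{i_m}|\neq |u^{2}_{i_m}|$, since your $t=1$ blocks are the $u^{1}_{i_i}$ and your $t=2$ blocks the $u^{2}_{i_i}$ --- and it is also unnecessary, because each $t$ independently yields $w=u^{t}_{i_1}\cdots u^{t}_{i_k}$, and these two equalities already constitute a solution. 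What your sweep over $j$ actually needs, and what you never correctly invoke, is the same-$t$, cross-$j$ consequence: one common support, i.e.\ one factorization $w=w_1\cdots w_l$ and one tuple $(m_1,\ldots,m_l)$ of integer-pop counts, shared by all mark positions $j$ (the leaf types are then uniquely determined by compatibility, so the whole analysis transfers). Without this, each $j$ could be served by its own witness, and the two conditions per $j$ only say that \emph{some} block of \emph{some} factorization equals $u^{t}_{i_j}$, which does not yield $w=u^{t}_{i_1}\cdots u^{t}_{i_k}$. With the correct reading, the step you leave as a handwave (``the appropriate $m$ determined by $l$, $i$, and $j$'') becomes a clean counting argument you should make explicit: compatibility gives $m_i\in\{1,\ldots,k\}$; the first condition, applied to each $j$, gives exactly one $i$ with $m_i=k-j+1$, so $i\mapsto m_i$ is a bijection onto $\{1,\ldots,k\}$ and $l=k$; the second condition makes the $m_i$ strictly decreasing, hence $m_i=k-i+1$, the match for mark $j$ sits at position $i=j$, and $w_j=u^{t}_{i_j}$ for every $j$ and both $t$.
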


\noindent By Proposition~\ref{remark:PCP}, 
we easily  deduce the following. 

\begin{proposition}\label{prop:PCP}
One can construct a $\CTL(\E\F,\E\X,\Aop\G,\Aop\X)$ formula $\psi_{\mathcal{I}}$ such that $\mathcal{I}$ admits a solution \emph{if and only if} there is a \emph{marked tree with witnesses} $WT_{marked}$  which satisfies $\psi_{\mathcal{I}}$.
\end{proposition}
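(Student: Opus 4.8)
The plan is to translate the structural characterization of Proposition~\ref{remark:PCP} into a formula of the fragment $\CTL(\E\F,\E\X,\Aop\G,\Aop\X)$, exploiting the fact that each condition in that proposition is a local or path-reachability property over the $AP$-labeled tree $WT_{marked}$. First I would handle the existence requirement: I want a tree that actually has some $end_2$-node generating witnesses, so $\psi_{\mathcal{I}}$ will contain a conjunct $\E\F\, end_2$. The remaining work is to write, for the two $\bot_t$-children hanging below each good $end_2$-node, formulas asserting the two bulleted conditions of Proposition~\ref{remark:PCP}, and then to guard them so they are checked exactly at the $\bot_t$-nodes.

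\medskip

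For the first bullet, I would express ``$ET_w^x$ has no $no_{match}$-node'' by the universal reachability statement $\Aop\G\,\neg no_{match}$ evaluated from $x$, and ``there is exactly one $match$-node'' by combining an existence clause $\E\F\, match$ with a uniqueness clause ruling out two distinct $match$-nodes. Since along each secondary chain the $match$/$no_{match}/prev/succ$ label appears only at the leaf, and the main path is a single branch decorated by secondary chains, ``at most one $match$'' amounts to forbidding a $match$-node that can still reach another $match$-node: I would write something like $\Aop\G\bigl(match \rightarrow \Aop\X\,\Aop\G\,\neg match\bigr)$, using only $\Aop\G$ and $\Aop\X$. For the second bullet --- no $\top_t$-node of type $match$ or $succ$ is strictly followed by a $\top_t$-node of type $match$ or $prev$ --- I would encode ``$x$ is of type $p$'' as $\E\X\,\E\F\, p$ (the node has a secondary-chain child from which the $p$-leaf is reachable), abbreviate $typeMS := (\E\X\E\F\,match)\vee(\E\X\E\F\,succ)$ and $typeMP := (\E\X\E\F\,match)\vee(\E\X\E\F\,prev)$, and then assert $\Aop\G\bigl(\top_t \wedge typeMS \rightarrow \Aop\X\,\Aop\G\,(\top_t \rightarrow \neg\, typeMP)\bigr)$, which again lives in the fragment. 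Assembling these into a single formula $\chi_t$ for each $t$, the full specification is $\psi_{\mathcal{I}} = \E F\, end_2 \wedge \Aop\G\bigl(\bigwedge_{t=1,2}(\bot_t \rightarrow \chi_t)\bigr)$, where each $\chi_t$ is relativized so that its $\Aop\G$/$\E\F$ operators range over the subtree below the $\bot_t$-node (this is automatic because the $\bot_t$-subtree is disjoint from the rest of $WT_{marked}$, the other labels not occurring there).

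\medskip

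The correctness argument then has two directions, each immediate from Proposition~\ref{remark:PCP}: if $\mathcal{I}$ has a solution, the marked tree with witnesses it guarantees satisfies every conjunct of $\psi_{\mathcal{I}}$ by construction; conversely a $WT_{marked}$ satisfying $\psi_{\mathcal{I}}$ has a witnessed $end_2$-node whose two $\bot_t$-subtrees meet exactly the two bullets, so Proposition~\ref{remark:PCP} yields a solution.

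\medskip

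The main obstacle I anticipate is the relativization and the faithful rendering of ``of type $p$'' and ``strictly followed by'' inside a fragment that lacks the Until operator. Without $\U$ I can only reach forward along paths with $\E\F$/$\Aop\G$ and step with $\E\X$/$\Aop\X$, so I must be sure that the geometry of $WT_{marked}$ --- main path plus disjoint secondary chains, with type-labels appearing only at chain leaves and $\top_t$ only on the main path --- makes these coarse modalities pin down precisely the intended nodes, with no spurious matches coming from a sibling subtree or from an unintended branch. Verifying that the disjointness of label alphabets across the components of $WT_{marked}$ keeps each guarded conjunct confined to the correct subtree, and that $\E\X\E\F\, p$ cannot accidentally read a label from an adjacent secondary chain, is the delicate bookkeeping that the proof must discharge.
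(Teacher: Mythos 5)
Your high-level plan coincides with the paper's (its proof of this proposition is literally ``by Proposition~\ref{remark:PCP}''): translate the two bullets into the fragment, guard them by $\Aop\G\bigl(\bigwedge_{t=1,2}(\bot_t\rightarrow\chi_t)\bigr)$, and add $\E\F\, end_2$. However, your concrete encoding of ``$x$ is of type $p$'' as $\E\X\,\E\F\,p$ is wrong, and this is exactly the delicate point you flag but do not discharge. The main-path child of a $\top_t$-node is labeled by a letter in $A$, and from it every \emph{later} secondary chain is reachable; so $\E\X\,\E\F\,p$ holds at a $\top_t$-node whenever \emph{some later} chain ends in $p$, not only its own. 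This breaks the left-to-right direction of the proposition: in a witness compatible with a good marked word, the types along the main path form a sequence $prev^{*}\cdot match\cdot succ^{*}$ (this is what the second bullet forces), and at any $prev$-node preceding the $match$-node your $typeMS$ holds spuriously via $\E\F\,match$ through the main path, while the $match$-node itself satisfies $typeMP$; hence your clause $\Aop\G\bigl(\top_t\wedge typeMS\rightarrow\Aop\X\,\Aop\G(\top_t\rightarrow\neg typeMP)\bigr)$ is \emph{violated} on precisely the trees that encode solutions (note the tree must contain witnesses for all mark positions $j$, so the $match$ cannot always be pushed to the first segment). Only the converse direction of your argument survives, because your type predicates over-approximate the true types, making your formula stronger than the bullets. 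The fix is to anchor the test to the node's own chain: since strategy trees here are minimal, a $\top_t$-node has at most one $\diamondsuit$-child, and a secondary chain's subtree contains exactly one type-leaf, so ``of type $p$'' is correctly expressed by $\E\X(\diamondsuit\wedge\E\F\,p)$ --- the $\diamondsuit$-guard excludes the main-path child.

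A second, related flaw: your uniqueness clause $\Aop\G(match\rightarrow\Aop\X\,\Aop\G\,\neg match)$ is vacuous. The $match$-nodes are \emph{leaves} of secondary chains, so $\Aop\X$ of anything is trivially true there; worse, two distinct $match$-leaves lie in disjoint chains and are incomparable, so no descendant-based exclusion stated at the $match$-nodes themselves can ever separate them. ``At most one $match$'' must be enforced at the $\top_t$-level, where the main path totally orders the candidates; with the corrected type encoding it already follows from the second-bullet clause (a type-$match$ node strictly followed by another instantiates $typeMS$ followed by $typeMP$), so ``exactly one $match$'' reduces to $\E\F\,match$ in conjunction with that clause. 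Finally, remember that occurrences like $\neg typeMP$ must be dualized into positive normal form, e.g.\ $\Aop\X\bigl(\neg\diamondsuit\vee\Aop\G\,\neg match\bigr)\wedge\Aop\X\bigl(\neg\diamondsuit\vee\Aop\G\,\neg prev\bigr)$, which stays inside $\CTL(\E\F,\E\X,\Aop\G,\Aop\X)$ since the fragment is closed under negation. With these repairs your construction does establish the proposition, along the same route the paper intends.
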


Since $\CTL(\E\F,\E\X,\Aop\G,\Aop\X)$ is closed under negation, Theorem~\ref{theorem:mainUndecidability} directly follows from Proposition~\ref{prop:PCP} and the following lemma.

\begin{lemma}\label{lemma:MainLemmaUndecidability} One can construct a \emph{stable} \OPD $\mathcal{S}$ with \emph{visible stack content depth} and having only environment configurations, and a $\CTL(\E\F,\E\X,\Aop\G,\Aop\X)$ formula $\phi$ such that the set of strategy trees of $\mathcal{S}$ which satisfy $\phi$ corresponds to the set of marked trees with witnesses.
\end{lemma}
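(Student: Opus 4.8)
The plan is to construct the \OPD $\mathcal{S}$ so that its strategy trees, read as $AP$-labeled trees, are precisely the marked trees with witnesses described in the encoding. The overall strategy mirrors the design philosophy already used in Section~\ref{sec:Problem1}: build a single pushdown system whose branching structure, together with the imperfect-information pruning consistent with $\cong$, forces exactly the tree shapes we want. Since the target trees have two phases --- a prefix ``marked word'' phase generating $\flat\cdot i_1\cdots i_k\cdot end_1\cdot w^R\cdot end_2$ with at most one marked integer, followed by a ``witness'' phase sprouting two extensions of $t$-witnesses at each good $end_2$-leaf --- I would design $\mathcal{S}$ with corresponding control modes, using the stack (with visible depth but with the \emph{identity} of integers/letters kept invisible) to remember the sequence $i_1,\ldots,i_k$ and the word $w$ so that the witness phase can be replayed against it.

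First I would realize the marked-word phase. Here $\mathcal{S}$ starts at $(q_0,\flat)$ and, through environment configurations (the \OPD is required to have only environment configurations), nondeterministically reads integers in $[n]$, pushing each onto the stack, with exactly one optional transition that emits $(i_j,\natural)$ instead of $i_j$; then it emits $end_1$, reads the reverse word $w^R$ over $A$ (pushing each letter), and emits $end_2$. The key point for \emph{minimality} of the resulting tree (children of a node carry distinct labels) and for the second bullet of the marked-tree definition --- that \emph{all} markings of a fixed $(i_1,\ldots,i_k,w)$ appear --- is the imperfect-information requirement. Because only the visible part $\vis(s)$ governs pruning, I must arrange the labeling so that two partial computations with the same $AP$-sequence are forced in or out of the strategy tree together; concretely, by keeping integer/letter identities invisible but the \emph{positions} (via the visible stack depth) and the proposition labels visible, the consistency condition $\vis(x_1)=\vis(x_2)\Rightarrow(x_1\in ST \iff x_2\in ST)$ is exactly what propagates the presence of one marked $(i_1,\ldots,i_k,w)$-word to all of them. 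Stability (Remark~\ref{remark:stable}) must be checked transition by transition, ensuring that visibly-equivalent non-terminal configurations offer visibly-matching successors.

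Next I would realize the witness phase. At each $end_2$-configuration the stack holds (the reverse of) $w$ on top of the encoded index sequence, with visible depth recording lengths. From there $\mathcal{S}$ launches, for each $t\in\{1,2\}$, a $\bot_t$-branch that walks down the stack reconstructing a factorization $w_1\cdots w_l=w$ into blocks of length at most $MAX$, emitting a $\top_t$ after each block and, at each $\top_t$, sprouting a secondary chain of $\diamondsuit$-nodes whose \emph{length} is chosen by popping the appropriate number of stack symbols --- this is how $\length(x)$ gets tied to $|\suffix(x)|$ and the index sequence, yielding compatibility with $v$. The terminal symbol in $\{prev,succ,no_{match},match\}$ appended to each chain is selected by comparing, via the finite control, $\length(x)$ against $|\suffix(x)|+k-j+1$ and $\word(x)$ against $u^t_{i_j}$, exactly as the compatibility clauses demand. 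The \emph{well-formedness requirement} --- that two good $end_2$-nodes sharing the same $(i_1,\ldots,i_k,w)$ produce witnesses with identical support --- is again enforced by imperfect information: the support depends only on $w$ and the factorization, which are visible through the stack depth and the $\top_t$/$A$ labels, so the pruning-consistency clause forces the same support at visibly-indistinguishable $end_2$-nodes.

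The main obstacle will be the witness phase's interaction between the invisible stack contents and the visible-depth constraint. I must simultaneously (i) hide integer and letter identities so that the environment's pruning cannot ``cheat'' by distinguishing runs it should not distinguish, yet (ii) keep enough visible (the depth, the block boundaries $\top_t$, and the secondary-chain lengths) that $\length(x)$, $\word(x)$, and $\suffix(x)$ are computable by the finite control against the recorded stack. Getting the secondary-chain length to equal a prescribed function of $|\suffix(x)|$ plus an offset determined by the marked position $j$, while popping only through visible-depth-respecting transitions and keeping $\mathcal{S}$ stable, is the delicate book-keeping; I expect this to require a careful layering of invisible control variables that track the residual index sequence as the witness branch descends. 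Once these transitions are laid out, verifying that the induced strategy trees are \emph{exactly} the marked trees with witnesses reduces to matching each defining clause of the encoding against one pruning-consistency or stability property, which is routine but lengthy.
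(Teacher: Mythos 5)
Your high-level two-phase architecture matches the paper's, but your proposal goes wrong at the single most delicate design point: \emph{what is invisible}. You propose to keep ``the identity of integers/letters invisible'' with only the stack depth visible, while simultaneously claiming ``the proposition labels visible''---these two stipulations are contradictory (the propositions $A\cup[n]\cup([n]\times\{\natural\})$ carry the identities), and each way of resolving the contradiction breaks the encoding. If identities are genuinely invisible in both stack and control labels, then sibling nodes emitting different integers or letters are visibly indistinguishable, so the pruning-consistency clause plus the non-blocking requirement forces all siblings to be kept together: every strategy tree would contain \emph{all} $(i_1,\ldots,i_k,w)$-words symmetrically, and the environment could no longer select a single candidate solution, which is what Proposition~\ref{remark:PCP} needs (the verdict conditions must be checkable on one chosen word, not on all words at once). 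If instead the labels are visible, then a state emitting $(i_j,\natural)$ is distinguishable from one emitting $i_j$, the environment can keep one marking of a word and prune another, and the second clause of the marked-tree definition (all markings of a chosen word co-occur) is no longer enforced. The paper's construction is calibrated exactly in between: $I_\Gamma=A\cup[n]$ and $H_\Gamma=\{\natural\}$, so the \emph{only} invisible datum is the marker (Property~A), with the matching clause on control states (Property~B), which additionally makes $prev$, $succ$, $match$, $no_{match}$ mutually indistinguishable---essential so that the environment's choice of secondary-chain length cannot depend on the verdict it produces.

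A second genuine gap: you assert that the strategy trees of $\mathcal{S}$ are \emph{by themselves} exactly the marked trees with witnesses, and you never construct the formula $\phi$ that the lemma requires. This cannot work with an all-environment \OPD: pruning can always drop the visibly distinct $\bot_2$-child of an $end_2$-node, keep several pop-branches so that a $\bot_t$-subtree has no unique main path, drop every $\diamondsuit$-child of a $\top_t$-node, or pop through all integers without ever entering a terminal $p$-configuration---so many strategy trees are not marked trees with witnesses, and $\phi$ must filter them out. In the paper $\phi$ does precisely this work: $\Aop\G\bigl(end_2\rightarrow\bigwedge_{t=1,2}\E\X(\bot_t\wedge\cdots)\bigr)$ forces both witness branches, $\psi_{unique}$ forces unique successors below $a$- and $\diamondsuit$-nodes, $\top_t\rightarrow\E\X\diamondsuit$ forces the secondary chains, and $\E\F\,\Aop\X\neg\mathtt{true}$ forces them to terminate in a leaf. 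Finally, a smaller but real flaw: the finite control cannot compare $\length(x)$ against $|\suffix(x)|+k-j+1$ arithmetically, nor ``track the residual index sequence'' in control variables, since $k$ and $j$ are unbounded; in the paper the comparison is positional---the chain deterministically pops the stack and the control only remembers whether the (invisibly) marked integer has already been popped, plus the bounded word $w_s\in A^{MAX}$ for the $match$/$no_{match}$ test---so the compatibility constraints fall out of the stack structure rather than from any length computation.
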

\begin{proof}
We informally describe the construction of the stable \OPD $\mathcal{S}=\tpl{AP,Q,q_0,\Gamma,\flat,\Delta,\mu,Env}$.
 Each state of $\mathcal{S}$ is an environment state, i.e. $Env=Q\times (\Gamma\cup \{\flat\})$, and  the  labeling function $\mu$  can be seen as mapping $\mu:Q \rightarrow AP$. 
The sets $I_\Gamma$ and $H_\Gamma$
of visible and invisible stack content variables are given by $I_\Gamma =A \cup [n]$ and $H_\Gamma=\{\natural\}$. Then, $\Gamma$ is given by $\Gamma=\{\{\gamma\}\mid \gamma\in I_\Gamma\}\cup \{\{i,\natural\}\mid i\in [n]\}$. We identify $\{\gamma\}$ with $\gamma$ and  $\{i,\natural\}$ with $(i,\natural)$. Hence, $\Gamma$ corresponds to the set
      $A\cup [n]\cup ([n]\times \{\natural\})$. Note that $\vis(\gamma)\neq\varepsilon$ for each $\gamma\in\Gamma$. Hence, the stack content depth of $\mathcal{S}$ is visible and:

\begin{itemize}
  \item \textbf{Property A:} for all $\gamma,\gamma'\in\Gamma$, $\vis(\gamma)=\vis(\gamma')$ iff either $\gamma=\gamma'$ or
      $\gamma,\gamma'\in \{i,(i,\natural)\}$  for some $i\in [n]$.
\end{itemize}

\noindent Furthermore, the definition of   $\mu$ and $P$ ensures the following: 

\begin{itemize}
  \item \textbf{Property B:} for all $q,q'\in P$, $\vis(q)=\vis(q')$ iff: (1) or $\mu(q)=\mu(q')$, or (2)
      $\mu(q),\mu(q')\in\{i,(i,\natural)\}$ for some $i\in [n]$, or (3)
      $\mu(q),\mu(q')\in \{no_{match},match,prev,succ\}$.\footnote{In fact, in order to ensure that $\mathcal{S}$ is stable, Property B is slightly more complicated. 
      }
\end{itemize}


\noindent \emph{First phase: generation of marked words.} Starting from the initial configuration (whose stack content and propositional label is $\flat$), the \OPD $\mathcal{S}$ generates  symbol by symbol,\footnote{i.e., the transitions in this phase lead to configurations labeled by propositions in $\{end_1,end_2\}\cup A\cup [n]\cup ([n]\times \{\natural\})$} by external nondeterminism, marked words. Whenever a symbol in
$A\cup [n]\cup ([n]\times \{\natural\})$ is generated, at the same time it is pushed onto the stack. Symbols in $\{end_1,end_2\}$ are generated by internal transitions that do not modify the stack content. The \OPD $\mathcal{S}$ keeps track by its finite control whether there is a marked integer in the prefix of the guessed marked word generated so far. In such a way, $\mathcal{S}$ can ensure that during the generation of a marked word, at most one integer occurrence in $[n]$ is marked. Let
 $\Upsilon$  be the set of $AP$-labeled trees $T$  such that there is a strategy tree $ST$ of $\mathcal{S}$  so that $T$ is obtained from $ST$ by pruning the subtrees rooted at the children of $end_2$-nodes. Then, Properties~A and B above ensure that $\Upsilon$ is  the \emph{set of marked trees}. \vspace{0.1cm}

\noindent \emph{Second phase: generation of extensions of $t$-witnesses, where $t=1,2$.}
Assume that $\mathcal{S}$ is in an $end_2$-state $s$ associated with some node $x_s$ of the  computation tree of $\mathcal{S}$ from the initial state. By construction, the partial path from the root to $x_s$ is labeled by some marked word $v$. If $v$ is not good, then $s$ has no successors. Now, assume that $v$ is good, hence, $v$ is of the form $\flat\cdot i_1\cdot\ldots\cdot(i_j,\natural)\cdot\ldots\cdot i_k\cdot end_1\cdot w^{R}\cdot end_2$, where $w\in A^{+}$ and $i_1,\ldots,i_k\in [n]^{+}$. By construction, the stack content in $s$ is given by $w\cdot i_k\cdot\ldots\cdot(i_j,\natural)\cdot\ldots\cdot i_1\cdot\flat$.
Then, from state $s$, $\mathcal{S}$ splits in two copies: the first one moves to a configuration $s_1$ labeled by $\bot_1$ and the second one moves to configuration $s_2$ labeled by $\bot_2$ (in both cases the stack content is not modified). Fix $t=1,2$. From state $s_t$, $\mathcal{S}$ generates by external nondeterminism extensions of $t$-witnesses compatible with the marked word $v$ as follows. Finite words of the form
$w_1\cdot\top_t\cdot\ldots\cdot\top_t\cdot w_l\cdot\top_t$, where $w_1,\ldots,w_l\in A^{MAX}$ and $w_1\cdot\ldots\cdot w_l=w$, labeling main paths of $t$-witnesses, are generated as follows. The symbol $\top_t$ is generated by internal  transitions which do not modify the stack content. Whenever the symbol $\bot_t$ (resp., $\top_t$) is generated,  $\mathcal{S}$ pops (resp., can pop) the stack symbol by  symbol  and generates the current popped symbol (with the restriction that a symbol can be popped iff it is in $A$). At the same time, $\mathcal{S}$ keeps track by its finite control of the string $w_s\in A^{MAX}$ popped so far. When $|w_s|=MAX$, then $\mathcal{S}$ deterministically moves to a $\top_t$-configuration (without changing the stack content).
If instead $|w_s|<MAX$, then $\mathcal{S}$ either continues to pop the stack content (if the top of the stack content is in $A$) or moves to a $\top_t$-configuration (without changing the stack content).  Additionally, from a $\top_t$-configuration, $\mathcal{S}$ can also choose to move to a $\diamondsuit$-configuration $s_{\diamondsuit}$ without changing the stack content. In $s_{\diamondsuit}$, $\mathcal{S}$
 keeps track in the control state of the word $w_s\in A^{MAX}$ (popped from the stack) and associated with the previous $\top_t$-configuration. Starting from $s_{\diamondsuit}$, $\mathcal{S}$ deterministically pops the stack symbol by symbol remaining in  $s_{\diamondsuit}$. When every symbol in $A$ has been popped (hence, the stack content is $i_k\cdot\ldots\cdot(i_j,\natural)\cdot\ldots\cdot i_1\cdot\flat$), $\mathcal{S}$ can choose to continue to pop the stack symbol by symbol by moving at each step to $\diamondsuit$-configurations and by keeping track in its finite control of the string $w_s$ and whether a marked integer in $[n]$ has been already popped. Additionally, whenever a symbol in $[n]\cup [n]\times \{\natural\}$ is popped,  $\mathcal{S}$ can choose to move without changing the stack content to a terminal $p$-configuration, where $p\in\{prev,succ,match,no_{match}\}$, such that the following holds: $p=succ$ (resp., $p=prev$) if an integer in $[n]$ is popped and no (resp., some) marked integer has been previously popped, and $p=match$ (resp., $p=no_{match}$) if a marked integer $(h,\natural)$ (note that $h=i_j$) is popped and $w_s=u^{t}_h$ (resp., $w_s\neq u^{t}_h$).

 We use the following $\CTL(\E\F,\E\X,\Aop\G,\Aop\X)$ formula $\phi$ in order to select strategy trees of $\mathcal{S}$ such that: (1) each $end_2$-node has two children (i.e., a child labeled by $\bot_1$ and a child labeled $\bot_2$), and (2) for each $t=1,2$, the subtree rooted at any $\bot_t$-node is an extension of a $t$-witness.
 In order to fulfill the second requirement, first, we need to ensure that from each $\bot_t$ node ($t=1,2$), there is a unique main path. Note that this last condition is equivalent to require that each $a$-node with $a\in A$ in a $\bot_t$-node rooted subtree has exactly one child (this can be easily expressed in $\CTL(\E\F,\E\X,\Aop\G,\Aop\X)$, since the strategies trees of $\mathcal{S}$ are \emph{minimal} $AP$-labeled trees). Second, we need to ensure that each $\top_t$-node has a $\diamondsuit$-child $x$ such that the subtree rooted at $x$ is a finite chain. 
 Hence, 
 formula $\phi$ is given by
 \vspace{0.1cm}

 \noindent \text{\hspace{0.0cm}} $\Aop\G(end_2 \rightarrow \bigwedge_{t=1,2}\E\X (\bot_t\,\wedge\,\Aop\G[(\bigvee_{a\in A}a \rightarrow \psi_{unique})
         \,\wedge\,(\top_t\rightarrow\E\X \diamondsuit)\,\wedge\,(\diamondsuit \rightarrow (\psi_{unique}\wedge \E\F\,\Aop\X\neg \mathtt{true}))]))$\vspace{0.1cm}

\noindent where $\psi_{unique}=\bigvee_{p\in AP}\Aop\X p$. By Properties A and B above it easily follows that the strategy trees of $\mathcal{S}$ satisfying the $\CTL(\E\F,\E\X,\Aop\G,\Aop\X)$ formula $\phi$, also satisfy the well-formedness requirement. Hence, the set of strategy trees of $\mathcal{S}$ satisfying $\phi$ is the set of marked  trees with witnesses.
\end{proof}

\subsection{Decidability results}

The main result of this subsection is as follows.

\begin{theorem}\label{theorem:DecidabilityMAIN} \PMC with imperfect information against $\ECTL$ restricted to stable \OPDS with visible stack content depth and having only environment configurations is decidable and in \TWOEXPTIME.
\end{theorem}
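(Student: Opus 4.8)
The plan is to reduce the complement of the problem to the non-emptiness problem for Büchi alternating visible pushdown automata (\AVPA), which is \TWOEXPTIME-complete \cite{Boz07}. Since $\varphi\in\ECTL$, its negation $\neg\varphi$ is an \ACTL\ formula, and by the definition of $\models_r$ we have that $\mathcal{M}_\System\models_r\varphi$ holds \emph{iff} \emph{no} strategy tree of $\System$ satisfies $\neg\varphi$. Hence it suffices to decide whether there \emph{exists} a strategy tree of $\System$ satisfying a given \ACTL\ formula $\psi$ (taking $\psi=\neg\varphi$) and to complement the answer; as \TWOEXPTIME\ is closed under complement, the bound is preserved. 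First I would exploit the hypotheses to obtain a visible presentation of strategy trees: since every configuration is an environment configuration, the strategy prunes at every node, and since $\System$ is stable with visible stack content depth, the strategy may be taken to depend only on the visible history, with each enabled visible move simultaneously realizable from \emph{every} actual configuration sharing that visible history (this is exactly the content of stability, cf.~Remark~\ref{remark:stable}).

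The core of the proof is a subset construction over the \emph{invisible} information, performed in lockstep with the \emph{visible} pushdown operations. Along a visible history $h$ I would track the set $\Sigma(h)$ of actual configurations of $\System$ consistent with $h$; because the stack content depth is visible, the real pushes and pops mirror the visible ones, so $\Sigma(h)$ evolves in a pushdown-synchronized fashion — precisely the regime in which \emph{visible} pushdown automata apply. Accordingly, I would build a Büchi \AVPA\ $\A$ whose stack stores the invisible stack-symbol component of $\Sigma(h)$ in register with the real stack, and whose finite control stores the invisible control component together with the subformulas of $\psi$ currently being verified. The automaton uses its \emph{existential} (nondeterministic) transitions to guess the strategy, i.e.\ which visible moves are enabled at the current visible history, and its \emph{universal} (alternating) transitions both to propagate the universal path quantifiers of $\psi$ ($\Aop\X$, $\Aop\U$, $\Aop\widetilde{\U}$) along all retained successors and to require that \emph{all} configurations in the tracked subset satisfy the local propositional constraints. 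The Büchi condition discharges the liveness obligations arising from $\Aop\F$/$\Aop\U$ and rejects ill-formed infinite behaviours. One then argues that $L(\A)\neq\emptyset$ \emph{iff} there is a strategy tree of $\System$ satisfying $\psi$, and invokes \cite{Boz07}.

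The main obstacle is establishing the soundness of interleaving the imperfect-information subset construction with the unbounded stack, and it is exactly here that every hypothesis is used. Stability guarantees that each visible move chosen by the strategy is realizable uniformly across all configurations in $\Sigma(h)$, so the subset neither desynchronizes nor spuriously shrinks; visibility of the stack content depth guarantees that $\A$ can keep its own stack aligned with the real one, so the invisible top-of-stack subsets can be pushed and correctly recovered on returns. The restriction to \ECTL\ (equivalently, to a purely universal $\psi$) is likewise essential: universal requirements are monotone and can be checked against an entire subset $\Sigma(h)$ at once, whereas an existential requirement would force the automaton to commit to a single invisible configuration that the environment cannot distinguish — and it is precisely this ability that the undecidability construction of Theorem~\ref{theorem:undecidabilityFirstStep} exploits. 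Getting the alternation to faithfully combine the universal path quantifiers of $\psi$ with the ``all consistent configurations'' universal branching, while keeping the existential strategy guess coherent across visibly-equivalent nodes, is the delicate part of the construction.
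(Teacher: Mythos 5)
Your overall plan---dualize to the \ACTL formula $\psi=\neg\varphi$, encode the visible evolution of a strategy as a word over a visibly pushdown alphabet, and decide existence of a witness strategy tree by non-emptiness of a B\"{u}chi \AVPA \cite{Boz07}---is the paper's route, but there is a genuine gap at the central step: you treat a strategy tree as if it had a \emph{single} visible history per level, i.e., as if it were encodable by one input word. A general strategy tree of an \OPD with only environment configurations may enable \emph{several} visibly distinct moves at the same level (nothing in the consistency condition forbids this); its branches then perform \emph{different} stack operations at the same time step---one branch pushes while another pops---and no word-reading \AVPA, whose universally spawned copies all read the same input and hence perform the same visible stack operation, can track such a tree. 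Your phrase ``guess which visible moves are enabled at the current visible history'' is exactly where this breaks. The missing idea is the paper's notion of \emph{minimal} strategy tree and the accompanying structure lemma: for a stable \OPD with visible stack content depth and only environment configurations, every level of a minimal strategy tree uses one uniform visible move (internal, push with a fixed visible symbol, or pop, with a fixed visible target), so minimal strategy trees are precisely the ones injectively encodable as words $w(ST)$ over $\Sigma_{\mathcal{S}}$; and since pruning at environment nodes never creates new leaves, every strategy tree contains a minimal one and \ECTL satisfaction is monotone under containment, whence $\mathcal{M}_{\mathcal{S}}\models_r\varphi$ iff $\mathcal{M}_{\mathcal{S}}\models_{r,min}\varphi$. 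You invoke monotonicity, but only to justify checking $\psi$ against the whole subset $\Sigma(h)$---not for the reduction to one visible move per level, which is what actually licenses the word encoding and the completeness direction of ``$\mathcal{L}(\mathcal{A})\neq\emptyset$ iff a witness exists.''

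Your subset construction also overshoots the complexity bound. Storing the set of invisible components of $\Sigma(h)$ in the finite control and sets of invisible stack components on the stack makes $\mathcal{A}$ exponential in $|\mathcal{S}|$; since \AVPA non-emptiness is \TWOEXPTIME-complete \emph{in the size of the automaton}, this yields a triply exponential procedure, not the claimed \TWOEXPTIME. The paper needs no subsets at all (cf.~Theorem~\ref{theoremFromOPDtoAVPA}): alternation itself implements the universal quantification over all configurations consistent with the visible history---each copy carries a single control state of $\mathcal{S}$, a \emph{guessed} top-of-stack symbol, and one subformula, so $\mathcal{A}$ is of linear size. The guess is verified by interleaving auxiliary $pop$ and $push$ letters into the input (the map $ext(w)$ over $\widehat{\Sigma}_{\mathcal{S}}$), a device you would also need, since a visibly pushdown automaton cannot inspect its top of stack on internal and push moves, yet the enabled transitions and the labeling of $\mathcal{S}$ depend on it; and strategy-tree validity (the environment cannot block a nonterminal node) is enforced by having any copy whose configuration is nonterminal but admits no transition consistent with the current input letter reject, rather than by your uniform-realizability bookkeeping.
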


 Theorem~\ref{theorem:DecidabilityMAIN} is proved by a reduction to non-emptiness of B\"{u}chi alternating visible pushdown automata (\AVPA) \cite{Boz07}, which is  \TWOEXPTIME-complete \cite{Boz07}. First, we briefly recall the framework of \AVPA. Then, we establish some additional decidability results. Finally, we prove Theorem~\ref{theorem:DecidabilityMAIN}.\vspace{0.2cm}

\noindent \textbf{B\"{u}chi \AVPA:}  A
\emph{pushdown alphabet} $\Sigma$ is a finite alphabet which is
partitioned in three disjoint finite alphabets $\Sigma^{call}$,
$\Sigma^{ret}$, and $\Sigma^{int}$, where $\Sigma^{call}$ is a  set
of \emph{calls}, $\Sigma^{ret}$ is a  set of \emph{returns}, and
$\Sigma^{int}$ is a  set of \emph{internal actions}.
An \AVPA is a standard alternating pushdown automaton on  words over a pushdown alphabet $\Sigma$, which  pushes onto (resp., pops) the stack only when
it reads a call (resp., a return), and does not use the stack on
internal actions. For a formal definition of the syntax and semantics of $\AVPA$ see \cite{Boz07}. 
Given a
B\"{u}chi \AVPA $\mathcal{A}$ over $\Sigma$, we denote by $\mathcal{L}(\mathcal{A})$ the set of nonempty finite or infinite words over $\Sigma$ accepted by $\mathcal{A}$ (we assume that $\mathcal{A}$ is equipped with both a B\"{u}chi acceptance condition for infinite words and a standard acceptance condition for finite words).\vspace{0.2cm}

\noindent \textbf{Preliminary decidability results:} For a module $\mathcal{M}$, a \emph{minimal} strategy tree $ST_{min}$ of $\mathcal{M}$ is a strategy tree satisfying the following: for each strategy tree $ST$ of $\mathcal{M}$ if $ST$ is contained in $ST_{min}$, then $ST=ST_{min}$.
Given a $\CTL$ formula $\varphi$, we say that  $\mathcal{M}$ \emph{minimally reactively satisfies} $\varphi$, denoted $\mathcal{M}\models_{r,min}\varphi$, if all the \emph{minimal} strategy trees of $\mathcal{M}$  satisfy $\varphi$.
Let $\mathcal{M}$ be a \emph{stable} module having only environment states and $ST$ be a minimal strategy tree of $\mathcal{M}$. For each $i\geq 0$, let $\Lambda_i$ be the set of nodes $x$ of $ST$ at distance $i$ from the root, i.e.,  such that $|x|=i$. Since $ST$ is minimal, it easily follows that for all $i\geq 0$ and $x,x'\in\Lambda_i$, $\vis(\state(x))=\vis(\state(x'))$. Now, let us consider a stable \OPD $\mathcal{S}=\tpl{AP,Q,q_0,\Gamma,\flat,\Delta,\mu,Env}$ with visible stack content depth and having only environment configurations.  By Remark~\ref{remark:stable}, $\mathcal{M}_{S}$ is stable. Let $ST$ be a minimal strategy tree of $\mathcal{S}$ and for each $i\geq 0$, let $\Lambda_i$ be defined as above (w.r.t.~strategy $ST$).  By the above observation, it easily follows that for each $i\geq 0$ such that $\Lambda_{i+1}\neq\emptyset$, there are $X_i\subseteq I$ (where $I$ is the set of visible control state variables of $\mathcal{S}$) and
$X_{i,\Gamma}\subseteq I_\Gamma$ (where $I_\Gamma$ is the set of visible stack content variables of $\mathcal{S}$) such that one of the following holds:
\begin{itemize}
  \item each node $x$ in $\Lambda_{i+1}$ is obtained from the parent node by an internal transition (depending on $x$) of the form
      $q\der{}q'$ such that $\vis(q')=X_i$;
  \item each node $x$ in $\Lambda_{i+1}$ is obtained from the parent node by a push transition (depending on $x$) of the form
      $q\der{\mathtt{push}(\gamma)}q'$ such that $\vis(q')=X_i$ and $\vis(\gamma)=X_{i,\Gamma}$;
  \item each node $x$ in $\Lambda_{i+1}$ is obtained from the parent node by a pop transition (depending on $x$) of the form
      $q\der{\mathtt{pop}(\gamma)}q'$ such that $\vis(q')=X_i$.
\end{itemize}

Let $\Sigma_{\mathcal{S}}$ be the pushdown alphabet defined as follows:
$\Sigma_{\mathcal{S}}^{call}=\{(push,X,X_\Gamma)\mid X=\vis(q)$ and $X_\Gamma=\vis(\gamma)$ for some $q\in Q$ and $\gamma\in\Gamma\}$,
$\Sigma_{\mathcal{S}}^{int}=\{(int,X)\mid X=\vis(q)$ for some $q\in Q\}$, and
$\Sigma_{\mathcal{S}}^{ret}=\{(pop,X)\mid X=\vis(q)$ for some $q\in Q\}$. Thus, we can associate to each finite (resp., infinite) minimal strategy tree $ST$ of
$\mathcal{S}$ a finite (resp., infinite) word over $\Sigma_{\mathcal{S}}$, denoted by $w(ST)$. Moreover, for each word $w$ over $\Sigma_{\mathcal{S}}$, there is at most one minimal strategy tree $ST$ of $\mathcal{S}$ such that $w(ST)=w$. This  observation leads to the following theorem, where $\widehat{\Sigma}_{\mathcal{S}}$
is the pushdown alphabet $\Sigma_{\mathcal{S}}\cup\{push,pop\}$, with $push$ being a call, and $pop$ a return.

\begin{theorem}\label{theoremFromOPDtoAVPA} Given a stable \OPD $\mathcal{S}$ with visible stack content depth and having only environment configurations and a $\CTL$ formula $\varphi$, one can construct in linear-time a  B\"{u}chi \AVPA $\mathcal{A}$ over $\widehat{\Sigma}_{\mathcal{S}}$  such that there is a \emph{minimal} strategy tree of $\mathcal{S}$ satisfying $\varphi$ iff $\mathcal{L}(\mathcal{A})\neq\emptyset$.
\end{theorem}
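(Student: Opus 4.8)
The plan is to convert the observations established just above the theorem into a uniform-path structure and then run a standard alternating-automaton model checker ``along the visible skeleton''. The crucial point is that, for a \emph{minimal} strategy tree $ST$ of the stable \OPD\ $\mathcal{S}$ (which has only environment configurations and visible stack content depth), every node at a given distance $i$ from the root carries the same visible state, so the visible transition leading to the next level is of a single type (internal, push, or pop) with a single visible target. Consequently \emph{every} root-to-node path of $ST$ has the \emph{same} visible skeleton, namely the word $w(ST)$ over $\Sigma_{\mathcal{S}}$, and the invisible nondeterminism of $\mathcal{S}$ is the only source of branching. This is exactly what makes a reduction to a \emph{word} automaton possible: an alternating device can explore all paths of $ST$ while every one of its branches consumes the \emph{same} input word.

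First I would fix the candidate input to be $w(ST)$ and let $\mathcal{A}$ read words over $\widehat{\Sigma}_{\mathcal{S}}$, using the extra bare letters $push$ and $pop$ as genuine call and return symbols so that the (visibly) pushdown store of $\mathcal{A}$ can faithfully mirror the \emph{full} stack content of $\mathcal{S}$ along the currently tracked path: at an \OPD\ push, $\mathcal{A}$ guesses and stores the full symbol $\gamma$ (its visible part dictated by the accompanying $(push,X,X_\Gamma)$ data letter, its invisible part guessed), and at the matching \OPD\ pop it recovers $\gamma$ from its stack, so that pop transitions of $\mathcal{S}$, whose applicability depends on the full top symbol $\TOP(\alpha)$, can be simulated. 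The control of $\mathcal{A}$ then carries a full control state of $\mathcal{S}$ for the tracked path together with a subformula of $\varphi$ (from its syntactic closure).

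The heart of the construction is the transition function, which simultaneously enforces realizability of $w(ST)$ and evaluates $\varphi$ by alternation. Reading a data letter amounts to one step in $ST$: from the tracked full state the automaton must choose a transition of $\mathcal{S}$ of the prescribed type whose target has the prescribed visible part. Universal path requirements ($\Aop\X$ and the $\Aop$-quantified $\U$, $\widetilde{\U}$) are handled by \emph{universal} choice over all invisible-consistent successors, existential ones ($\E\X$ and the $\E$-quantified cases) by \emph{existential} choice; the Boolean connectives and the until/release operators are unfolded exactly as in the classical \CTL\ alternating-automaton translation, with a B\"uchi condition rejecting runs that postpone an eventuality forever and a separate finite-word condition handling leaves and release. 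Realizability --- that the chosen visible action is legitimate, i.e.\ that \emph{every} full state available at the current level still has a consistent successor, so the minimal pruning blocks no node --- is threaded through the same universal branches; stability of $\mathcal{S}$ (see Remark~\ref{remark:stable}) is precisely what guarantees a common visible target for all such states, so these universal checks succeed exactly on words that are genuine skeletons $w(ST)$.

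For correctness I would argue both implications through the skeleton correspondence. If some minimal $ST\models\varphi$, then $w(ST)$ is accepted by letting the tracked path of $\mathcal{A}$ range (via alternation) over the paths of $ST$ and its subformula component follow a satisfaction witness, all branches reading the common word $w(ST)$; conversely, an accepting run on a word $w$ yields, by the injectivity noted above, the unique minimal strategy tree with $w(ST)=w$ (the realizability branches guarantee it exists), and alternating acceptance certifies that it satisfies $\varphi$. The construction is plainly linear in $|\mathcal{S}|$ and $|\varphi|$. I expect the main obstacle to be the correctness argument for the branching-time operators: one must check carefully that letting all alternating branches consume the \emph{same} input word really does simulate independent exploration of the invisibly branching tree $ST$, and that the interaction between the B\"uchi/finite acceptance for $\U$ and $\widetilde{\U}$ and the existence of invisibly-terminal states (paths of $ST$ ending before $w$ does) is handled so that $\Aop\X$ holds vacuously at leaves while $\E\X$ fails there. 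Getting these boundary cases right --- rather than the routine closure bookkeeping --- is where the real work lies.
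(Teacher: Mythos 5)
Your overall architecture coincides with the paper's proof: the same level-by-level skeleton correspondence $ST\mapsto w(ST)$ (all root-to-node paths of a minimal strategy tree share one visible skeleton, so alternation can explore the invisible branching while every copy consumes the same input word), the classical alternating-automaton unfolding of $\CTL$ with a B\"uchi condition for postponed eventualities and a finite-word condition at leaves, and realizability threaded through universal branches, with stability guaranteeing the common visible target. Up to that point your proposal and the paper agree.

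The genuine gap sits exactly at the one non-routine device, namely why the automaton works over $\widehat{\Sigma}_{\mathcal{S}}$ at all. You claim the bare letters $push$ and $pop$ serve ``so that the pushdown store of $\mathcal{A}$ can faithfully mirror the full stack content'' and so that pop transitions of $\mathcal{S}$ can be simulated --- but both of these come for free in any \AVPA: on the call $(push,X,X_\Gamma)$ the automaton pushes the full symbol of the simulated transition, and on the return $(pop,X)$ it pops and its transition may depend on the popped symbol, so applicability of $q\der{\mathtt{pop}(\gamma)}q'$ is checkable on the spot with no extra letters. What the extra letters actually solve --- and what your proposal never addresses --- is that \emph{after} a pop the new top symbol $\TOP(\alpha)$ is buried in the automaton's own stack, yet the simulation needs it in the finite control: the labeling $\mu(q,\TOP(\alpha))$ (hence evaluation of every atomic proposition), terminality of $(q,\TOP(\alpha))$ (needed both for your realizability check and for the leaf semantics of $\Aop\X$/$\E\X$ that you rightly flag), and the next pop's applicability all depend on it, and a visibly pushdown device cannot peek at its stack without consuming a return. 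The paper therefore uses control states $(p,\gamma,\psi,f)$ carrying the current top $\gamma$ as a \emph{guess}, and accepts only words of the form $ext(w)$, in which every return $(pop,X)$ is followed by a bare $pop$ (pop once more and verify the guess against the actual symbol) and a bare $push$ (re-push $\gamma$ to restore the stack), the flag $f\in\{sim,pop,push\}$ policing this three-letter protocol. Without this peek-and-restore mechanism (or an equivalent one, e.g.\ storing with each pushed symbol the symbol beneath it), the automaton you describe cannot evaluate propositions or detect terminal configurations after any pop, so both the formula-checking and the realizability components break on every word containing a return; since you also never specify where the bare letters occur in the input or which transitions consume them, this is a missing idea rather than a presentational slip --- though a repairable one, as the rest of your argument tracks the paper's proof faithfully.
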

\begin{proof}
The proposed construction is a generalization of the standard alternating automata-theoretic approach to $\CTL$ model checking~\cite{KVW00}.
Here, we informally describe the main aspects of the construction. Let $\mathcal{S}=\tpl{AP,P,p_o,\Gamma,\flat,\Delta,\mu,Env}$.
W.l.o.g.~we assume that the initial configuration of $\mathcal{S}$ is non-terminal.
For a word $w$ over $\Sigma_\mathcal{S}$, we denote by $ext(w)$ the  word over $\widehat{\Sigma}_{\mathcal{S}}$ obtained from $w$ by replacing each occurrence of a return symbol $(pop,X)$ in $w$ with the word
$(pop,X),pop,push$. We construct a  B\"{u}chi \AVPA $\mathcal{A}$ over $\widehat{\Sigma}_{\mathcal{S}}$ such that for each non-empty  word $\widehat{w}$ over $\widehat{\Sigma}_{\mathcal{S}}$, $\mathcal{A}$ has an accepting run over $\widehat{w}$ \emph{if and only if} $\widehat{w}=ext(w)$ for some  word $w$ over $\Sigma_\mathcal{S}$ and there is a minimal strategy tree $ST$ of $\mathcal{S}$ such that $w=w(ST)$ and $ST$  satisfies $\varphi$.
Essentially, for each  word $w$ over $\Sigma_\mathcal{S}$ associated with some minimal strategy tree $ST$ of $\mathcal{S}$, an accepting run $r$ of $\mathcal{A}$ over $ext(w)$ encodes $ST$ as follows: the nodes of $r$ associated with the $i$-th symbol of $w$ correspond to the nodes  of $ST$ at distance $i$ from the root. However, for each node $x$ of $ST$, there can be many copies of $x$ in the run $r$. Each of such copies has the same stack content as $x$, but its control state  is equipped with additional information including one of the subformulas of $\varphi$ which holds at node $x$ of $ST$.

The \AVPA $\mathcal{A}$ has the same stack alphabet as $\mathcal{S}$. Its set of control states is instead given by the set of tuples of the form $(p,\gamma,\psi,f)$, where $(p,\gamma)\in P\times (\Gamma\cup\{\flat\})$, $\psi$ is a subformula of $\varphi$, and $f$ is an additional state variable in $\{sim,pop,push\}$. Intuitively, $p$ represents the current control state of $\mathcal{S}$ and $\gamma$ represents the guessed top symbol of the current stack content. Furthermore,  $f$ is used to check that the input word is an extension of some  word over $\Sigma_\mathcal{S}$.
The additional symbols $pop$ and $push$ in  $\widehat{\Sigma}_\mathcal{S}$ are instead used to check that the guess $\gamma$ is correct. The behavior of $\mathcal{A}$ as follows. Assume that a copy of $\mathcal{A}$ is in a control state of the form $(p',\gamma',\psi',sim)$ and the current input symbol is $\sigma$, where
  $p'$ is the current control state of $\mathcal{S}$ and $\gamma'$ is the top symbol of the current stack content (initially, $\mathcal{A}$ is in the control state $(p_0,\flat,\varphi,sim)$).  If $\sigma\in\{pop,push\}$, then the input is rejected. If instead $\sigma$ is call (resp., an internal action) in $\Sigma_{\mathcal{S}}$, then the considered copy of $\mathcal{A}$ simulate push (resp., internal) transitions of $\mathcal{S}$ from the current configuration (of the form $(p',\alpha)$ such that $\TOP(\alpha)=\gamma'$) consistent with $\sigma$ if such  transitions exist by splitting in one or more copies (depending on the number of simulated transitions and the structure of $\psi$), each of them moving to a control state of the form
  $(p,\gamma,\psi,sim)$. Note that in this case, $\mathcal{A}$ can ensure that the guess $\gamma$ is correct. Now, assume that $\sigma$ is a return in
  $\Sigma_{\mathcal{S}}$. Then, the considered copy of $\mathcal{A}$ guesses a stack symbol $\gamma\in \Gamma\cup\{\flat\}$ and  simulate pop transitions
of $\mathcal{S}$ from the current configuration consistent with $\sigma$ (if such transitions exist) by splitting in  one or more copies (depending on the number of simulated transitions and the structure of $\psi$), each of them moving to a control state of the form
$(p,\gamma,\psi,pop)$. In the next step, the input symbol must be $pop$ (otherwise, the input is rejected). Thus, the current copy in control state $(p,\gamma,\psi,pop)$  pops the stack and check whether the guess $\gamma$ is correct. If the guess  is correct, then the copy moves to the control state  $(p,\gamma,\psi,push)$ (otherwise, the run is rejecting). In the next step, the input symbol must be $push$ (otherwise, the input is rejected). Thus, the considered copy re-pushes $\gamma$ onto the stack and moves to control state $(p,\gamma,\psi,sim)$.
  Assuming that the input word is $ext(w)$ for some nonempty word $w$ over $\Sigma_\mathcal{S}$, the above behavior
 ensures, in particular, that whenever
an input symbol in $\Sigma_{\mathcal{S}}$ is read, $\mathcal{A}$ is in a control state of the form $(p,\gamma,\psi,sim)$, where $\gamma$ is the top symbol of the current stack content.
Finally, $\mathcal{A}$ checks whether $w$ is associated with some  minimal strategy tree of $\mathcal{S}$ as follows.
First, we observe that a nonempty word $w$ over $\Sigma_{\mathcal{S}}$ is not associable to any minimal strategy tree of $\mathcal{S}$ iff the following holds. There is a proper prefix $w'$ of $w$ of length $i$ for some $i\geq 0$ such that $w'$ is the prefix of $w(ST)$ for some minimal strategy tree $ST$ of $\mathcal{S}$ such that: there is a node $x$ of $ST$ at distance $i+1$ from the root whose configuration $(p,\alpha)$ has some successor, but there is no transition from $(p,\alpha)$ which is consistent with the $i+1$-th symbol of $w$. Thus, whenever a copy of $\mathcal{A}$ reads a symbol $\sigma\in\Sigma_{\mathcal{S}}$, hence the considered copy is in a control state of the form $(p,\gamma,\psi,sim)$ (where $p$ is the current control state of $\mathcal{S}$ and $\gamma$ is the top symbol of the current stack content), $\mathcal{A}$ rejects the input string if: the current configuration of $\mathcal{S}$ has some successor (i.e., $(p,\gamma)$ is non-terminal), but there is no transition from the current configuration which is consistent with the current input symbol $\sigma$.
\end{proof}

Since non-emptiness of \AVPA is  \TWOEXPTIME-complete \cite{Boz07}, by Theorem~\ref{theoremFromOPDtoAVPA}, we obtain the following.

\begin{corollary}\label{cor:FromOPDtoAVPA} Checking whether $\mathcal{M}_{\mathcal{S}}\models_{r,min}\varphi$, for a given $\CTL$ formula $\varphi$ and a given  stable \OPD $\mathcal{S}$ with visible stack content depth and having only environment configurations,  is in \TWOEXPTIME.
\end{corollary}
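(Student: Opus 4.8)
The plan is to reduce the universal question $\mathcal{M}_{\mathcal{S}}\models_{r,min}\varphi$ to an emptiness test for a B\"{u}chi \AVPA, by applying Theorem~\ref{theoremFromOPDtoAVPA} to the \emph{negation} of $\varphi$. First I would unfold the definition of minimal reactive satisfaction: $\mathcal{M}_{\mathcal{S}}\models_{r,min}\varphi$ holds exactly when every minimal strategy tree of $\mathcal{S}$ satisfies $\varphi$, and hence it \emph{fails} exactly when some minimal strategy tree satisfies $\neg\varphi$ (recall that $\CTL$ semantics over a fixed $2^{AP}$-labeled tree is two-valued, so ``does not satisfy $\varphi$'' coincides with ``satisfies $\neg\varphi$''). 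Since $\CTL$ in positive normal form is effectively closed under negation --- the path quantifiers $\E,\Aop$ are mutually dual and the temporal operators $\U,\widetilde{\U}$ are mutually dual, so $\neg\varphi$ can be driven into positive normal form by the standard De Morgan dualization with only a linear increase in size --- the formula $\neg\varphi$ is again a legitimate $\CTL$ input to Theorem~\ref{theoremFromOPDtoAVPA}.

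Next I would invoke Theorem~\ref{theoremFromOPDtoAVPA} on $\mathcal{S}$ and $\neg\varphi$. This produces, in time linear in the sizes of $\mathcal{S}$ and $\varphi$, a B\"{u}chi \AVPA $\mathcal{A}$ over $\widehat{\Sigma}_{\mathcal{S}}$ for which there is a minimal strategy tree of $\mathcal{S}$ satisfying $\neg\varphi$ if and only if $\mathcal{L}(\mathcal{A})\neq\emptyset$. Chaining this with the previous observation yields
\[
\mathcal{M}_{\mathcal{S}}\models_{r,min}\varphi \;\iff\; \text{no minimal strategy tree of } \mathcal{S} \text{ satisfies } \neg\varphi \;\iff\; \mathcal{L}(\mathcal{A})=\emptyset,
\]
so that deciding minimal reactive satisfaction amounts exactly to deciding \emph{emptiness} of $\mathcal{A}$.

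Finally I would appeal to the cited bound that non-emptiness of B\"{u}chi \AVPA is \TWOEXPTIME-complete~\cite{Boz07}, and in particular decidable in \TWOEXPTIME. Because \TWOEXPTIME\ is a deterministic time class and hence closed under complement, the complementary emptiness problem is likewise in \TWOEXPTIME. Since $\mathcal{A}$ is built in linear time (so its size is polynomial in the input) and the subsequent emptiness test costs \TWOEXPTIME\ in $|\mathcal{A}|$, the overall procedure --- construct $\mathcal{A}$, then test $\mathcal{L}(\mathcal{A})=\emptyset$ --- runs in \TWOEXPTIME, which is the claimed bound.

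I do not expect a genuine obstacle here: all the substantive content has already been absorbed into Theorem~\ref{theoremFromOPDtoAVPA}, and the corollary is essentially a packaging step. The only two points that deserve (routine) attention are guaranteeing that the dualization $\varphi\mapsto\neg\varphi$ stays inside positive-normal-form $\CTL$ with a size blow-up that keeps the hypotheses of Theorem~\ref{theoremFromOPDtoAVPA} applicable, and noting that the quoted complexity concerns \emph{non-}emptiness while the reduction needs \emph{emptiness}, the gap being bridged by closure of \TWOEXPTIME\ under complementation.
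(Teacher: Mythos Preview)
Your proposal is correct and follows exactly the route the paper intends: the paper's own justification is the single sentence ``Since non-emptiness of \AVPA is \TWOEXPTIME-complete~\cite{Boz07}, by Theorem~\ref{theoremFromOPDtoAVPA}, we obtain the following,'' and your write-up simply makes explicit the negation step and the closure of \TWOEXPTIME\ under complement that this sentence leaves implicit.
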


\noindent \textbf{Proof of Theorem~\ref{theorem:DecidabilityMAIN}}: let $\varphi$ be an $\ECTL$ formula over $AP$. Note that for all $2^{AP}$-labeled trees $T$ and $T'$, if $T$ is contained in $T'$ and $T$ satisfies $\varphi$, then $T'$ satisfies $\varphi$ as well. Note that for a given module $\mathcal{M}$, each strategy tree of $\mathcal{M}$ contains some minimal strategy tree. Hence, for an $\ECTL$ formula $\varphi$, $\mathcal{M}\models_{r}\varphi$ if and only if $\mathcal{M}\models_{r,min}\varphi$. Thus,  Theorem~\ref{theorem:DecidabilityMAIN} directly follows from Corollary~\ref{cor:FromOPDtoAVPA}.
Finally, for completeness, we observe that unrestricted \PMC with imperfect information against $\ACTL$ is trivially decidable. Indeed
for an $\ACTL$ formula $\varphi$ and module $\mathcal{M}$,  $\mathcal{M}\models_{r}\varphi$ iff the \emph{maximal} strategy tree of $\mathcal{M}$ (i.e., the computation tree of $\mathcal{M}$ starting from the initial state) satisfies $\varphi$. Hence, \PMC with imperfect information against $\ACTL$  is equivalent to standard pushdown model checking against $\ACTL$, which is in \EXPTIME\ \cite{Wal00}.

\begin{proposition} \PMC with imperfect information against $\ACTL$ is in \EXPTIME.
\end{proposition}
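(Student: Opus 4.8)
The plan is to reduce \PMC with imperfect information against \ACTL to ordinary (closed-system) \CTL pushdown model checking, which is known to be in \EXPTIME~\cite{Wal00}. The key observation is that for the universal fragment \ACTL the quantification over all strategy trees collapses to a single tree, namely the maximal strategy tree, and that the imperfect-information constraints on pruning become irrelevant because no pruning is performed at all.

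First I would record two structural facts about strategy trees. Given a module $\mathcal{M}$ with initial state $s_0$, the computation tree $T_{\mathcal{M},s_0}$ is itself a strategy tree (the one induced by the maximal environment), and every strategy tree $ST$ of $\mathcal{M}$ is contained in $T_{\mathcal{M},s_0}$. Moreover, since a non-terminal node of the computation tree always retains at least one child in any strategy tree, a node is a leaf of $ST$ only if it is a terminal configuration, hence also a leaf of $T_{\mathcal{M},s_0}$; consequently \emph{every full path of $ST$ starting at a node $x$ is also a full path of $T_{\mathcal{M},s_0}$ from $x$}. This path-preservation is exactly the property that makes downward monotonicity hold for \ACTL (note that it would fail for arbitrary tree containment, where a branch could be truncated prematurely, falsifying e.g.\ $\Aop\F\,prop$).

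Next I would prove the monotonicity lemma: for every strategy tree $ST$ of $\mathcal{M}$, every node $x$ of $ST$, and every \ACTL subformula $\psi$ of $\varphi$, if $(T_{\mathcal{M},s_0},x)\models\psi$ then $(ST,x)\models\psi$. The proof is a structural induction on $\psi$ following the positive-normal-form grammar. The cases $\mathtt{true}$, $prop$, $\neg prop$ and the boolean connectives are immediate, since $ST$ and $T_{\mathcal{M},s_0}$ carry the same label at $x$. For $\Aop\X\psi'$ one uses that the children of $x$ in $ST$ form a subset of its children in $T_{\mathcal{M},s_0}$, so the universal requirement is inherited. For $\Aop(\psi_1\U\psi_2)$ and $\Aop(\psi_1\widetilde{\U}\psi_2)$ one uses the path-preservation fact: a full path of $ST$ from $x$ is a full path of $T_{\mathcal{M},s_0}$ from $x$, on which the until/release requirement holds by hypothesis; applying the induction hypothesis at the relevant positions (all of which are nodes of $ST$) transfers the per-position conditions from $T_{\mathcal{M},s_0}$ to $ST$. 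Instantiating the lemma at the root gives: if $T_{\mathcal{M},s_0}\models\varphi$ then every strategy tree satisfies $\varphi$; together with the fact that $T_{\mathcal{M},s_0}$ is itself a strategy tree, this yields $\mathcal{M}\models_r\varphi$ iff $T_{\mathcal{M},s_0}\models\varphi$.

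Finally I would close the reduction. For an \OPD $\mathcal{S}$, the maximal strategy tree $T_{\mathcal{M}_\System,s_0}$ is exactly the computation tree obtained by unwinding the underlying pushdown system while ignoring the environment/system partition and the visibility equivalence $\cong$; hence deciding $T_{\mathcal{M}_\System,s_0}\models\varphi$ is precisely standard \CTL pushdown model checking of a closed pushdown system against the formula $\varphi\in\ACTL\subseteq\CTL$, which is in \EXPTIME~\cite{Wal00}. The main obstacle is getting the monotonicity lemma right in the until case: one must invoke that a full path of the strategy tree is genuinely a full path of the computation tree, which relies on the environment being unable to block a non-terminal state; dropping this hypothesis would make downward monotonicity, and hence the whole collapse to the maximal tree, fail.
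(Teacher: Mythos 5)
Your proposal is correct and takes essentially the same route as the paper: the paper also observes that for $\ACTL$, $\mathcal{M}\models_r\varphi$ holds iff the \emph{maximal} strategy tree (the computation tree from the initial state) satisfies $\varphi$, reducing the problem to standard pushdown model checking, which is in \EXPTIME~by \cite{Wal00}. The paper states this equivalence without proof, while you correctly supply the supporting details---in particular the path-preservation fact (environments cannot block non-terminal states, so full paths of a strategy tree are full paths of the computation tree) and the downward-monotonicity induction that hinges on it.
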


\section{Conclusion}

There is an intriguing question left open. We have shown the \PMC with imperfect information for stable \OPDS with visible stack content depth and having only environment configurations is undecidable for the fragment $\CTL(\E\F,\E\X,\Aop\G,\Aop\X)$ of $\CTL$, and decidable  for the fragments $\ECTL$ and $\ACTL$ of $\CTL$. Thus, it is open the decidability status of the problem above for the standard $\E\F$-fragment of $\CTL$
(using just the temporal modality $\E\F$ and its dual $\Aop\G$). We conjecture that the problem is decidable.

\bibliographystyle{eptcs}
\bibliography{refGandalf}
\end{document}